\documentclass[11pt]{article}
\usepackage{amsmath}
\usepackage{amssymb}
\usepackage{amsfonts}
\usepackage{latexsym}
\newcommand{\be}{\begin{equation}}
\newcommand{\en}{\begin{equation}}
\newcommand{\mc}{\mathcal}
\newcommand{\mb}{\mathbb}
\newcommand{\D}{\mc D}
\newcommand{\Hil}{\mc H}
\newcommand{\A}{\mc A}
\newcommand{\Ao}{{{\mc A}_0}}
\newcommand{\LL}{\mc L}

\newcommand{\LD}{{\LL}^\dagger (\D)}
\newcommand{\1}{1\!\!\!\!1}
\newcommand{\LDD}{{\LL} (\D,\D')}
\newcommand{\LDH}{{\LL}^\dagger (\D,\Hil)}

\newtheorem{defn}{Definition}[section]
\newtheorem{prop}[defn]{Proposition}
\newtheorem{thm}[defn]{Theorem}
\newtheorem{lemma}[defn]{Lemma}

\newenvironment{proof}{\noindent {\bf Proof --}}{\hfill$\square$ \vspace{3mm}\endtrivlist}

\textwidth16cm \textheight22cm

\hoffset-1.7cm \voffset-1cm

\begin{document}

\thispagestyle{empty}

\vspace*{0.7cm}

\begin{center}
{\Large \bf Derivations of quasi *-algebras}

\vspace{1.5cm}

{\large F. Bagarello }
\vspace{3mm}\\
 Dipartimento di Matematica e Applicazioni \\
Facolt\`a d' Ingegneria - Universit\`a di Palermo \\ Viale delle
Scienze,       \baselineskip15pt
     I-90128 - Palermo - Italy \vspace{2mm}\\

\vspace{5mm}

{\large A. Inoue}
\vspace{3mm}\\
Department of Applied Mathematics,\\ Fukuoka University,\\
J-814-80 Fukuoka, Japan

\vspace{3mm} and

\vspace{3mm} {\large C.Trapani}
\vspace{3mm}\\
 Dipartimento di Matematica e Applicazioni \\
Universit\`a di Palermo\\Via Archirafi 34, \baselineskip15pt
I-90123 - Palermo - Italy\vspace{2mm}

\end{center}

\vspace*{12mm}

 \noindent{\sc Abstract} \\\noindent The spatiality of
derivations of quasi *-algebras is investigated by means of
representation theory. Moreover, in view of physical applications,
the spatiality of the limit of a family of spatial derivations is
considered.

\vspace{3cm} \noindent{\em 2000 Mathematics Subject
Classification}: 47L60; 47L90.
 \vfill

\newpage

\section{Introduction}
In the so-called algebraic approach to quantum systems, one of the
basic problems to solve consists in the rigorous definition of the
algebraic dynamics, i.e. the time evolution of observables and/or
states. For instance, in quantum statistical mechanics or in
quantum field theory one tries to recover the dynamics by
performing a certain limit of the strictly {\em local} dynamics.
However, this can be successfully done only for few models and
under quite strong topological assumptions (see, for instance,
\cite{sakai} and references therein). In many physical models the
use of local observables corresponds, roughly speaking, to the
introduction  of some {\em cut-off} (and to its successive
removal) and this is in a sense a general and frequently used
procedure,
see \cite{thi,bagarello,bagatrap,lass1} for conservative and \cite{sew,sewbag} for dissipative systems.\\
Introducing a cut-off means that in the description of some
physical system, we know a {\em regularized} hamiltonian $H_L$,
where $L$ is a certain parameter closely depending on the nature
of the system under consideration. The role of the commutator
$[H_L,A]$, $A$ being an observable of the physical system
 (in a sense that will be made clearer in the following), is crucial in the analysis of the dynamics of the system.
 We have discussed several properties of this map in a recent paper, \cite{bagtra}, focusing our attention mainly on the
 existence of the algebraic dynamics $\alpha_t$ given a family of operators $H_L$ as above. Here, in a certain sense,
 we reverse the point of view. We start with a (generalized) derivation $\delta$ and we first consider the
 following problem: under which conditions is the map $\delta$ spatial (i.e., is implemented by a certain operator)?
 The spatiality of derivations is a very classical problem when
 formulated in *-algebras and it as been extensively studied in
 the literature in a large variety of situations, mostly depending
 on the topological structure of the *-algebras under
 consideration (C*-algebras, von Neumann algebras, O*-algebras,
 etc. See \cite{sakai,brarob,camillo, book}).
 In this paper we consider a more general set-up, turning our
 attention to derivations taking their values in a quasi
 *-algebra. This choice is motivated by possible applications to
 the physical situations described above. Indeed, if $\Ao$ denotes
 the *-algebra of local observables of the system, in order to
 perform the so-called thermodynamical limits of certain local
 observables, one endows $\Ao$ with a locally convex topology
 $\tau$, conveniently chosen for this aim (the so called {\em
 physical topology}). The completion $\A$ of $\Ao[\tau]$, where
 thermodynamical limits mostly live, may fail to be an algebra
 but it is in general a {\em quasi *-algebra} \cite{lass1, ctrev,
 book}.
 For these reasons we start with considering, given a quasi
 *-algebra $(\A,\Ao)$, a derivation $\delta$ defined in $\Ao$
 taking its values in $\A$, and investigate its spatiality. In
 particular, we consider the case where $\delta$ is the limit of a
 net $\{\delta_L\}$ of spatial derivations of $\Ao$ and give
 conditions for its spatiality and for the implementing operator
 to be the limit, in some sense, of the operators $H_L$ implementing
 the $\{\delta_L\}$'s.

The paper is organized as follows:

In the next section we give the essential definitions of the
algebraic structures needed in the sequel.

In Section 3, the possibility of extending $\delta$ beyond $\Ao$,
through a notion of $\tau-$closability is investigated.

Section 4 is devoted to the analysis of the spatiality of
*-derivations which are induced by *-representations, and of the
spatiality of the limit of a net of spatial *-derivations. We also
extend our results to the situation where the *-representation,
instead of living in Hilbert space, takes its values in a quasi
*-algebra of operators in rigged Hilbert space
(qu*-representation).

 \section{The mathematical framework}

Let $\A$ be a linear space and $\Ao$ a  $^\ast$ -algebra contained
in $\A$. We say that $\A$ is a quasi  $^\ast$ -algebra with
distinguished  $^\ast$ -algebra $\Ao$ (or, simply, over $\Ao$) if
\begin{itemize}\item[(i)] the left multiplication $ax$ and the right multiplication $
xa$ of an element $a$ of $\A$ and an element $x$ of $\A_0$ which
 extend the multiplication of $\A_0$ are always defined and
bilinear; \item[(ii)] $x_1 (x_2 a)= (x_1x_2 )a$ and $x_1(a
 x_2)= (x_1 a) x_2$, for each $x_1, x_2 \in \A_0$ and $a \in \A$;

\item[(iii)] an involution $*$ which extends the involution of $\A_0$
is defined in $\A$ with the property $(ax)^*= x^*a^*$ and $(xa)^
* =a^* x^*$ for each $x \in \A_0$ and $a \in \A$.
\end{itemize}
A quasi  $^\ast$ -algebra $(\A,\Ao)$ is said to have a unit
$\mathbb{I}$ if there exists an element $\mathbb{I} \in \Ao$ such
that $a\mathbb{I} =\mathbb{I} a=a, \;\, \forall a\in \A$. In this
paper we will always assume that the quasi $^\ast$ -algebra under
consideration have an identity.

Let $\A_0[\tau]$ be a locally convex $*$-algebra. Then the
completion $\overline{\A_0}[\tau]$ of $\A_0[\tau]$ is a quasi $*
$-algebra over $\A_0$ equipped with the following left and right
 multiplications: \\
for any $x \in \A_0$ and $a \in \A$,
$$
ax\equiv \lim_\alpha x_\alpha x \hspace{3mm}\mbox{ and
}\hspace{3mm} xa \equiv \lim_\alpha xx_\alpha,
$$
where $\{ x_\alpha \}$ is a net in $\A_0$ which converges to $a$
 w.r.t. the topology $\tau$.
Furthermore, the left and right multiplications are separately
continuous. A $*$-invariant subspace $\A$ of
$\overline{\A_0}[\tau]$ containing $\A_0$ is said to be a {\it
(quasi-) $*$-subalgebra} of $\overline{\A_0}[\tau]$ if $ax$ and
$xa$ in $\A$ for any $x \in \A_0$ and $a \in \A$. Then we have
\[
x_1(x_2a)= \lim_\alpha x_1 (x_2 x_\alpha)= \lim_\alpha(x_1x_2)x_
\alpha = (x_1x_2)a
\]
and similarly,
\begin{align*}
& (ax_1)x_2  = a (x_1 x_2), \\
& x_1 (ax_2)= (x_1 a)x_2
\end{align*}
for each $x_1, x_2 \in \A_0$ and $a \in \A$, which implies that
$\A$ is a quasi $*$-algebra over $\A_0$, and furthermore,
$\A[\tau]$ is a locally convex space containing $\A_0$ as dense
subspace and the right and left multiplications are separately
continuous. Hence, $\A$ is said to be a {\it locally convex quasi
$*$-algebra} over $\A_0$.

If $(\A[\tau], \Ao)$ is a locally convex quasi *-algebra, we
indicate with $\{p_\alpha, \alpha\in I\}$ a directed set of
seminorms which defines $\tau$.

In a series of papers (\cite{bt1}-\cite{bt4}) we have considered a
special class of quasi *-algebras, called CQ*- algebras, which
arise  as completions of C*-algebras. They can be introduced in
the following way:

Let $\A$ be a right Banach module over the C*-algebra $\A_\flat$
with involution $\flat$ and C*-norm $\|.\|_\flat$, and further
with isometric involution $*$ and such that $\A_\flat \subset \A$.
Set $\A_\sharp = (\A_\flat)^*$.  We say that $\{\A, *, \A_\flat,
\flat\}$ is a CQ*-algebra if
\begin{itemize}
\item[(i)] $\A_\flat$ is dense in $\A$ with respect to its norm $\|\,\|$,
\item[(ii)]$\A_o:=\A_\flat \cap \A_\sharp$ is dense in $\A_\flat$ with respect to its norm  $\|\,\|_\flat$,
\item[(iii)] $(ab)^* = b^*a^*, \quad \forall a,b \in \A_o$,
\item[(iv)]$\|y\|_\flat = \sup_{a \in \A, \|a\|\leq 1}\|ay\|,\quad y \in \A_\flat$.
\end{itemize}

Since $*$ is isometric, the space $\A_\sharp$ is itself, as it is easily seen, a C*-algebra with respect to the involution $x^\sharp:={(x^*)^\flat}^*$ and the norm $\|x\|_\sharp :=\|x^*\|_\flat$.\\
A CQ*-algebra is called {\em proper} if $\A_\sharp=\A_\flat$. When
also $\flat =\sharp$, we indicate a proper CQ*-algebra with the
notation $({\mc A},*,\Ao)$, since * is the only relevant
involution and $\Ao=\A_\sharp=\A_\flat$.

An example of CQ*-algebra is provided by certain subspaces of
${\mc B}({\mc H}_{+1},{\mc H}_{-1})$, ${\mc B}({\mc H}_{+1})$,
${\mc B}({\mc H}_{-1})$, the spaces  of operators acting on a
triplet (scale) of Hilbert spaces generated in canonical way by an
unbounded operator $S\geq \1$. For details,
see\cite{bt1,bt2,book}. From a purely algebraic point of view,
each CQ*-algebra can be considered as an example of partial
*-algebra, \cite{AK,book}, by which we mean a vector space $\A$
with involution $a \rightarrow a ^\ast$ [i.e. $(a+\lambda b) ^\ast
=a ^\ast +\overline{\lambda} b ^\ast$ ; $a=a^{\ast \ast}$ ] and a
subset  $\Gamma \subset\A\times\A$ such that (i)  $(a,b)\in
\Gamma$  implies $(b ^\ast ,a ^\ast )\in \Gamma$ ; (ii) $(a,b)$
and $(a,c)\in  \Gamma$ imply $(a,b+\lambda c)\in  \Gamma$ ; and
(iii) if $(a,b)\in \Gamma$, then there exists an element  $ab\in
\A$ and for this multiplication (which is not supposed to be
associative) the following properties hold:\\ if $(a,b)\in \Gamma$
and $(a,c)\in \Gamma$  then $ab+ac=a(b+c)$ and  $(ab) ^\ast =b
^\ast a ^\ast$ .

In the following we also need the concept of {\em
*-representation}.

Let $\D$ be a dense domain in Hilbert space $\Hil$. As usual we
denote with $\LD$ the space of all closable operators $A$ with
domain $\D$, such that $D(A^*)\supset \D$ and both $A$ and $A^*$
leave $\D$ invariant. As is known, $\D$ is a *-algebra with the
usual operations $A+B$, $\lambda A$, $AB$ and the involution
$A^\dagger=A^*|_\D$. Let now $\A$ be a locally convex quasi
*-algebra over $\Ao$ and $ \pi_o$ be a *-representation of $\Ao$,
that is, a *-homomorphism from $\Ao$ into the *-algebra $\LD$, for
some dense domain $\D$. In general, extending $ \pi_o $ beyond
$\Ao$ will force us to abandon the invariance of the domain $\D$.
That is, for $A \in \A \backslash \Ao$, the extended
representative $\pi(A)$ will only belong  to $\LDH$, which is
defined as the set of all closable operators $X$ in $\Hil$ such
that
 $D(X)  = \D$ and $D(X^*)  \supset  \D$ and it is a partial *-algebra (called partial O*-algebra on $\D$) with the usual operations
 $X+Y$, $\lambda X$, the involution $X^\dagger=X^*|\D$ and the weak product $X${\tiny$\Box$}$Y\equiv X^{\dagger *}Y$ whenever $Y\D\subset D(X^{\dagger *})$ and $X^\dagger\D\subset D(Y^{*})$.

It is also known that, defining on $\D$ a suitable (graph)
topology, one can build up the
 rigged Hilbert space $\D \subset \Hil  \subset \D'$, where  $\D'$ is the dual of $\D$, \cite{gelf},
and one has
$$
\LD \;  \subset \; \LDD,
$$
where $\LDD$ denotes the space of all continuous linear maps from
$\D$ into $\D'$. Moreover, under additional topological
assumptions, the following inclusions hold: $\LD \subset \; \LDH
\subset\;\LDD$. A more complete definition will be given in Section 4.

Let $(\A,\Ao)$ be a quasi *-algebra, $\D_\pi$ a dense domain in a
certain Hilbert space $\Hil_\pi$, and $\pi$ a linear map from $\A$
into $\LL^\dagger(\D_\pi, \Hil_\pi)$ such that:

(i) $\pi(a^*)=\pi(a)^\dagger, \quad \forall a\in \A$;

(ii) if $a\in \A$, $x\in \Ao$, then $\pi(a)${\tiny$\Box$}\!\!
$\pi(x)$ is well defined and $\pi(ax)=\pi(a)${\tiny$\Box$}\!\!
$\pi(x)$.

We say that such a map $\pi$ is a *-representation of $\A$.
Moreover, if

(iii) $\pi(\Ao)\subset \LL^\dagger(\D_\pi)$,

then $\pi$ is a *-representation of the quasi *-algebra
$(\A,\Ao)$.

Let $\pi$ be a *-representation of $\A$. The strong topology
$\tau_s$ on $\pi(\A)$ is the locally convex topology defined by
the following family of seminorms: $\{p_\xi(.); \;
\xi\in\D_\pi\}$, where $p_\xi(\pi(a))\equiv\|\pi(a)\xi\|$, where
$a\in \A$, $\xi\in \D_\pi$.

For an overview on partial *-algebras and related topics we refer
to \cite{book}.

\section{*-Derivations and their closability}

Let $(\A,\Ao)$ be a quasi *-algebra.

\begin{defn}
A {\em *-derivation of}  $\Ao$ is a map $\delta: \Ao\rightarrow
\A$
 with the following properties:
\begin{itemize}
\item[(i)]  $\delta(x^*)=\delta(x)^*, \; \forall x \in \Ao$;
\item[(ii)] $\delta(\alpha x+\beta y) = \alpha \delta( x)+\beta\delta( y), \; \forall x,y \in \Ao, \forall \alpha,\beta \in \mathbb{C}$;
\item [(iii)] $\delta(xy) = x\delta( y)+\delta( x)y,  \; \forall x,y \in \Ao$.
\end{itemize}
\label{Definition 3.1}
\end{defn}

As we see, the *-derivation is originally defined only on $\Ao$.
Nevertheless, it is clear that this is not the unique possibility
at hand: $\delta$ could also be defined on the whole $\A$, or in a
subset of $\A$ containing $\Ao$, under some continuity or
closability assumption. Since continuity of $\delta$ is a rather
strong requirement, we consider here a weaker condition:

\begin{defn}
A *-derivation $\delta$ of  $\Ao$ is said to be $\tau$-closable
if, for any net $\{x_\alpha\}\subset\Ao$ such that
$x_\alpha\stackrel{\tau}{\rightarrow} 0$ and
$\delta(x_\alpha)\stackrel{\tau}{\rightarrow} b\in\A$, one has
$b=0$. \label{Definition 3.2}
\end{defn}

If $\delta$ is a $\tau$-closable *-derivation then we define
\begin{equation}
D(\overline\delta)=\{a\in\A :\, \exists
\{x_\alpha\}\subset\Ao\mbox{ s.t. } \tau-\lim_\alpha x_\alpha=a
\mbox{ and } \delta(x_\alpha) \mbox{ converges in } \A\}.
\label{31}
\end{equation}
Now, for any $a\in D(\overline\delta)$, we put
\begin{equation}
\overline\delta(a)=\tau-\lim_\alpha \delta(x_\alpha), \label{32}
\end{equation}
and the following lemma holds:

\begin{lemma}

If $\delta(\Ao)\subset\Ao$ then $D(\overline\delta)$ is a quasi
*-algebra over $\Ao$.

\label{lemma31}
\end{lemma}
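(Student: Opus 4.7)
The strategy is to verify, one by one, the three axioms of quasi *-algebra for the pair $(D(\overline\delta),\Ao)$. Because $D(\overline\delta)\subset\A$ and $(\A,\Ao)$ is already a quasi *-algebra, the associativity-type identities in item (ii) of the definition will hold automatically for any admissible triple in $D(\overline\delta)$. Thus three things need to be checked: that $\Ao\subset D(\overline\delta)$, that $D(\overline\delta)$ is a *-invariant linear subspace of $\A$, and — this is the substantive point — that left and right multiplication by elements of $\Ao$ leave $D(\overline\delta)$ stable.

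The first two items are routine. For $\Ao\subset D(\overline\delta)$ take the constant net $x_\alpha\equiv x$, which trivially converges and for which $\delta(x_\alpha)=\delta(x)$ is eventually constant. For linearity, given $a,b\in D(\overline\delta)$ with approximating nets $\{x_\alpha\}$ and $\{y_\beta\}$, I would combine them into the product net indexed by $(\alpha,\beta)$ and use linearity of $\delta$ to handle $\lambda a+\mu b$. For *-invariance, if $x_\alpha\to a$ and $\delta(x_\alpha)\to\overline\delta(a)$, continuity of the involution on $\A[\tau]$ gives $x_\alpha^*\to a^*$, while property (i) of Definition~\ref{Definition 3.1} gives $\delta(x_\alpha^*)=\delta(x_\alpha)^*\to\overline\delta(a)^*$, whence $a^*\in D(\overline\delta)$ with $\overline\delta(a^*)=\overline\delta(a)^*$.

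The central step is multiplication. Let $x\in\Ao$ and $a\in D(\overline\delta)$, and pick $\{y_\alpha\}\subset\Ao$ with $y_\alpha\to a$ and $\delta(y_\alpha)\to\overline\delta(a)$. By separate continuity of left multiplication, $xy_\alpha\to xa$. By the Leibniz rule,
\[
\delta(xy_\alpha)=x\,\delta(y_\alpha)+\delta(x)\,y_\alpha.
\]
Separate continuity of left multiplication by the fixed $x\in\Ao$ gives $x\,\delta(y_\alpha)\to x\,\overline\delta(a)$. For the second summand the hypothesis $\delta(\Ao)\subset\Ao$ is essential: it allows us to regard $\delta(x)$ as a fixed element of $\Ao$, and separate continuity of left multiplication by this element then yields $\delta(x)\,y_\alpha\to\delta(x)\,a$. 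Hence $\delta(xy_\alpha)$ converges in $\A$, proving $xa\in D(\overline\delta)$ and, as a bonus, that
\[
\overline\delta(xa)=x\,\overline\delta(a)+\delta(x)\,a.
\]
The argument for $ax$ is symmetric, based on $\delta(y_\alpha x)=\delta(y_\alpha)\,x+y_\alpha\,\delta(x)$ and separate continuity of right multiplication by $\delta(x)\in\Ao$.

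The main obstacle is exactly this last convergence: without $\delta(\Ao)\subset\Ao$, the element $\delta(x)$ would live only in $\A$, the product $\delta(x)\,a$ with $a\in\A$ would not in general be defined in the quasi *-algebra, and separate continuity would not control the limit of $\delta(x)\,y_\alpha$. The hypothesis thus functions as a structural compatibility condition keeping every multiplication that appears in the Leibniz expansion inside an admissible pair of $(\A,\Ao)$.
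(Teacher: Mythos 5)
Your proof is correct and follows essentially the same route as the paper's: closure under involution via $\tau$-continuity of the involution, and stability under multiplication by $\Ao$ via the Leibniz rule, separate continuity of the multiplications, and the hypothesis $\delta(\Ao)\subset\Ao$, which keeps $\delta(x)$ in $\Ao$ so that the relevant products and limits make sense. Your write-up is in fact slightly more explicit than the paper's (constant nets for $\Ao\subset D(\overline\delta)$, product-indexed nets for linearity, and the convergence of $\delta(x_\alpha^*)$), but the substance is identical.
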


\begin{proof}
First we observe that $D(\overline\delta)$ is a complex vector
space. In particular, it is closed under involution. In fact, from
the definition itself, if $a\in D(\overline\delta)$ then there
exists a net $\{x_\alpha\}$ $\tau$-converging to $a$. But, since
the involution is $\tau$-continuous, the net $\{x_\alpha^*\}$ is
$\tau$-converging to $a^*\in\A$. We conclude that whenever $a$
belongs to $D(\overline\delta)$,  $a^*\in D(\overline\delta)$.

Next we show that the multiplication between an element $a\in
D(\overline\delta)$ and  $x\in\Ao$ is well-defined. We consider
here the product $ax$. The proof of the existence of $xa$ is
similar.

Since  $a\in D(\overline\delta)$ then  there exists
$\{x_\alpha\}\subset\Ao$ such that
$x_\alpha\stackrel{\tau}{\rightarrow} a$. Moreover the net
$\delta(x_\alpha)$ $\tau$-converges to an element $b\in\A$:
$\delta(x_\alpha)\stackrel{\tau}{\rightarrow}
b=\overline\delta(a)$. Recalling now that the multiplication is
separately continuous and since, by assumptions,
$\delta(x)\in\Ao$, we deduce that $\delta(x_\alpha
x)=\delta(x_\alpha)x+x_\alpha\delta(x)\stackrel{\tau}{\rightarrow}
\overline\delta(a)x+a\delta(x)$, which shows that $ax$ belongs to
$D(\overline\delta)$ and that
$\overline\delta(ax)=\tau-\lim_\alpha\delta(x_\alpha x)$.

\end{proof}

This Lemma shows that, under some assumptions, it is possible to
extend $\delta$ to a set larger than $\Ao$ which, also if it is
different from $\A$, is a  quasi *-algebra over $\Ao$ itself. This
result suggests the following rather general definition:

\begin{defn}
Let $(\A,\Ao)$ be a quasi *-algebra and $\D$ be a vector subspace
of $\A$ such that $(\D,\Ao)$ is a quasi *-algebra. A map $\delta:
\D\rightarrow \A$ is called a *-derivation if
\begin{itemize}
\item[(i)]  $\delta(\Ao)\subset\Ao$ and $\delta_0\equiv\delta|_\Ao$ is a *-derivation of $\Ao$;
\item[(ii)] $\delta$ is linear;
\item [(iii)] $\delta(ax) = a\delta( x)+\delta( a)x= a\delta_0( x)+\delta( a)x, \: \forall a\in \D$ and $\forall x\in \Ao$.
\end{itemize}
\label{Definition 3.3}
\end{defn}

\vspace{2mm}

{\bf Remark:--} Because of the previous results, if $\delta_0$ is
$\tau$-closable then its closure $\overline\delta_0$ is a
*-derivation defined on $D(\overline\delta_0)$.

\vspace{2mm}

Now we look for conditions for a *-derivation $\delta$ to be
closable, making use of some duality result. For that we first
recall that if $(\A[\tau],\Ao)$ is a locally convex quasi
*-algebra and $\delta$ is a *-derivation of $\Ao$, we can define
the adjoint derivation $\delta'$ acting on a subspace $D(\delta')$
of the dual space $\A'$ of $\A$.  The derivation $\delta'$ is
first defined, for $\omega\in\A'$ and $x\in\Ao$, by
$(\delta'\omega)(x)=\omega(\delta(x))$ and then extended to the
domain
$$
D(\delta')=\{\omega\in\A': \, \delta'\omega \mbox{ has a
continuous extension to } \A\}.
$$

A classical result, \cite{kothe}, states that $\delta$ is
$\tau$-closable if, and only if, $D(\delta')$ is
$\sigma(\A',\A)$-dense in $\A'$. We now prove the following
result.

\begin{prop}
Let $\delta:\Ao\rightarrow\A$ be a *-derivation. Assume that there
exists $\omega\in\A'$ such that $\omega|_{\Ao}$ is a positive
linear functional on $\Ao$ and

(1) $\omega\circ\delta$ is $\tau$-continuous on $\Ao$;

(2) the GNS-representation $\pi_\omega$ of $\A_0$ is faithful.

Then $\delta$ is $\tau$-closable.
\end{prop}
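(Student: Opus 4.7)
The plan is to fix a net $\{x_\alpha\}\subset\Ao$ with $x_\alpha\stackrel{\tau}{\rightarrow} 0$ and $\delta(x_\alpha)\stackrel{\tau}{\rightarrow} b\in\A$, and to show that $b=0$ by testing $b$ against the sesquilinear form $(y,z)\mapsto\omega(y^* b z)$ on $\Ao\times\Ao$ and then transferring the information to the GNS Hilbert space of $\omega$.

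First, I would fix $y\in\Ao$ and use the Leibniz rule inside $\Ao$: since $y^* x_\alpha y\in\Ao$,
\[
y^*\,\delta(x_\alpha)\,y \;=\; \delta(y^* x_\alpha y) \;-\; \delta(y^*)\,x_\alpha y \;-\; y^* x_\alpha\,\delta(y).
\]
Apply $\omega$ and take $\tau$-limits. Separate continuity of the left and right multiplications gives $y^* x_\alpha y\stackrel{\tau}{\rightarrow} 0$, so hypothesis (1) forces $\omega(\delta(y^* x_\alpha y))\to 0$; the other two terms on the right contain the factor $x_\alpha\stackrel{\tau}{\rightarrow}0$, hence also tend to $0$ in $\tau$, and are killed by the continuity of $\omega\in\A'$. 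On the left, $y^*\delta(x_\alpha)y\stackrel{\tau}{\rightarrow} y^* b y$ by separate continuity, so $\omega(y^* b y)=0$ for every $y\in\Ao$. Polarizing the complex sesquilinear form $(y,z)\mapsto\omega(y^* b z)$ then yields $\omega(y^* b z)=0$ for all $y,z\in\Ao$.

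For the final step I would read this identity inside the GNS triple $(\pi_\omega,\Hil_\omega,\xi_\omega)$ of the positive functional $\omega|_{\Ao}$: the relation $\omega(y^* b z)=0$ says precisely that the sesquilinear form associated with $b$ vanishes on the dense subspace $\pi_\omega(\Ao)\xi_\omega\subset\Hil_\omega$, so the operator representing $b$ must be zero; combined with the faithfulness of $\pi_\omega$, this should force $b=0$ in $\A$. The main difficulty I anticipate is exactly this last passage, since $b\in\A$ need not lie in $\Ao$ and there is no ready-made $\pi_\omega(b)$ at hand. The natural way to handle it is either to extend $\pi_\omega$ to a qu*-representation of the whole of $\A$ (in the spirit of the construction sketched at the end of Section 2), or to approximate $b$ in $\tau$ by a net in $\Ao$ and transfer the vanishing of the bilinear form back to the vanishing of $b$, invoking the continuity of $\omega$ together with the faithfulness of $\pi_\omega$ on $\Ao$.
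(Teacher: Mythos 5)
Your computation up to the identity $\omega(y^* b z)=0$ for all $y,z\in\Ao$ is sound: the Leibniz manipulation stays inside the allowed products of the quasi *-algebra, the limit arguments use only the continuity of $\omega$, of $\omega\circ\delta$, and of the multiplications (the same separate-continuity properties the paper itself invokes in its estimates), and the polarization step is valid for an arbitrary complex sesquilinear form. Note also that your route is genuinely different from the paper's: the paper never argues on nets, but uses the K\"othe duality criterion ($\delta$ is $\tau$-closable iff $D(\delta')$ is $\sigma(\A',\A)$-dense in $\A'$), shows that the functionals $\omega_{x,z}(y)=\omega(xyz)$, $x,z\in\Ao$, lie in $D(\delta')$, and then deduces density from hypothesis (2). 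Your argument is essentially the unfolded version of that one: you test the limit element $b$ against exactly these functionals, since $\omega(y^*bz)=\omega_{y^*,z}(b)$.

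However, there is a genuine gap, and it sits exactly where you say it does. To conclude you need that $\omega(y^*bz)=0$ for all $y,z\in\Ao$ forces $b=0$, i.e. that the family $\{\omega_{x,z}\}$ separates the points of $\A$; hypothesis (2) only gives faithfulness of $\pi_\omega$ as a representation of $\Ao$, which disposes of elements of $\Ao$, not of a general $b\in\A$. Neither of your proposed repairs closes this as stated: approximating $b$ by a net $\{x_\beta\}\subset\Ao$ only yields $\langle\pi_\omega(x_\beta)\lambda_\omega(z),\lambda_\omega(y)\rangle\to 0$, i.e. weak convergence of the representatives to zero, and faithfulness of $\pi_\omega$ is a purely algebraic statement about its kernel with no topological content, so nothing follows about $b$; extending $\pi_\omega$ to a qu*-representation $\pi$ of $\A$ does give $\pi(b)=0$, but then you need faithfulness of the extension on all of $\A$, which is not among the hypotheses. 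For what it is worth, the paper's own proof glosses over the same point: non-density of the span of the $\omega_{x,z}$ would, by the bipolar theorem, produce a nonzero annihilating element of $\A$, whereas the paper takes it in $\Ao$ and applies faithfulness of $\pi_\omega$ only there. So you have correctly isolated the true crux of the proposition, but your proof is not complete as written: finishing it requires the separation property on all of $\A$ (for instance, faithfulness of the natural extension of $\pi_\omega$ to $\A$), not merely faithfulness of the GNS representation of $\Ao$.
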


\begin{proof}
First we notice that condition (1) above implies that $\omega\in
D(\delta')$. Secondly, let $x,y,z\in\Ao$. Since
$\omega(x\delta(y)z)=\omega(\delta(xyz))-\omega(\delta(x)yz)-\omega(xy\delta(z))$,
we have, as a consequence of the continuity of $\omega\circ\delta$
and of $\omega$ itself:
$$
|\omega(x\delta(y)z)|\leq p_\alpha(xyz)+p_\beta(\delta(x)yz)+
p_\gamma(xy\delta(z))\leq C_{x,z} p_\sigma(y),
$$
where we  have also used the continuity of the multiplication.
$C_{x,z}$ is a suitable positive constant depending on both $x$
and $z$. Let us further define a new linear functional
$\omega_{x,z}(y)=\omega(xyz)$. Of course we have
$|\omega(xyz)|\leq D_{x,z}p_\alpha(y)$, for some seminorm
$p_\alpha$ and a positive constant $D_{x,z}$. It follows that
$\omega_{x,z}$ has a continuous extension to $\A$, which we still
denote with the same symbol. Moreover, since
$(\delta'\omega_{x,z})(y)=\omega_{x,z}(\delta(y))=\omega(x\delta(y)z)$,
we have $|(\delta'\omega_{x,z})(y)|\leq C_{x,z} p_\sigma(y)$, for
every $y\in\Ao$. This implies that $\omega_{x,z}$ belongs to
$D(\delta')$ or, in other words, that $\omega_{x,z}$ has a
continuous extension to $\A$. For this reason we have
$D(\delta')\supset$ linear span$\{\omega_{x,z}: \, x,z\in\Ao\}$,
and this set is dense in $\A'$. Were it not so, then there would
exists a non  zero element $y\in\Ao$ such that $\omega_{x,z}(y)=0$
for all $x,z\in\Ao$. But, this is in contrast with the
faithfulness of the GNS-representation $\pi_\omega$ since we would
also have
$\omega(xyz)=<\pi_\omega(y)\lambda_\omega(z),\lambda_\omega(x^*)>=0$
for all $x,z\in\Ao$, which, in turn, would imply that
$\pi_\omega(y)=0$.

\end{proof}

\section{Spatiality of *-derivations induced by *-representations}

Let $(\A,\Ao)$ be a quasi *-algebra and $\delta$ be a *-derivation
of $\Ao$ as defined in the previous section. Let $\pi$ be a
*-representation of  $(\A,\Ao)$. We will always assume that
whenever $x\in \Ao$ is such that $\pi(x)=0$,  $\pi(\delta(x))=0$
as well. Under this assumption, the linear map
\begin{equation}
\delta_\pi(\pi(x))=\pi(\delta(x)), \quad x\in \Ao, \label{41}
\end{equation}
is well-defined on $\pi(\Ao)$ with values in $\pi(\A)$ and it is a
*-derivation of $\pi(\Ao)$. We call $\delta_\pi$ the *-derivation
{\em induced} by $\pi$.

Given such a representation $\pi$ and its dense domain $\D_\pi$,
we consider the usual graph topology $t_\dagger$ generated by the
seminorms

\begin{equation}
\xi\in\D_\pi \rightarrow \|A\xi\|, \quad A\in \LL^\dagger(\D_\pi).
\label{42}
\end{equation}

Calling $\D_\pi'$ the conjugate dual of $\D_\pi$ we get the usual
rigged Hilbert space $\D_\pi[t_\dagger] \subset \Hil_\pi  \subset
\D_\pi'[t_\dagger']$, where $t_\dagger'$ denotes the strong dual
topology of $\D_\pi'$. As usual we denote with
$\LL(\D_\pi,\D_\pi')$ the space of all continuous linear maps from
$\D_\pi[t_\dagger]$ into $\D_\pi'[t_\dagger']$, and with
$\LL^\dagger(\D_\pi)$ the *-algebra of all  operators $A$ in
$\Hil_\pi$ such that both $A$ and its adjoint $A^*$ map $\D_\pi$
into itself. In this case, $\LL^\dagger(\D_\pi)\subset
\LL(\D_\pi,\D_\pi')$. Each operator $A\in \LL^\dagger(\D_\pi)$ can
be extended to all of $\D_\pi'$ in the following way:
$$
<\hat A\xi',\eta>=<\xi',A^\dagger \eta>, \quad \forall \xi'\in
\D_\pi', \quad \eta\in \D_\pi.
$$
Therefore the multiplication of  $X\in\LL(\D_\pi,\D_\pi')$ and
$A\in\LL^\dagger(\D_\pi)$ can always be defined:
$$
(X\circ A)\xi=X(A\xi), \mbox{ and } (A\circ X)\xi=\hat A(X\xi),
\quad \forall \xi\in \D_\pi.
$$

With these definitions it is known that
$(\LL(\D_\pi,\D_\pi'),\LL^\dagger(\D_\pi))$ is a quasi *-algebra.

We can now prove the following

\begin{thm}

Let $(\A,\Ao)$ be a locally convex quasi *-algebra with identity
and $\delta$ be a *-derivation of $\Ao$.

Then the following statements are equivalent:

(i) There exists a $(\tau-\tau_s)$-continuous, ultra-cyclic
*-representation $\pi$ of $\A$, with ultra-cyclic vector $\xi_0$,
such that the *-derivation $\delta_\pi$ induced by  $\pi$  is
spatial, i.e.

there exists $H=H^\dagger\in \LL(\D_\pi,\D_\pi')$ such that
$H\xi_0\in \Hil_\pi$ and

\begin{equation}
\delta_\pi(\pi(x))=i\{H\circ\pi(x)-\pi(x)\circ H\},\quad \forall
x\in\Ao. \label{43}
\end{equation}

\vspace{2mm}

(ii) There exists a positive linear functional $f$ on $\Ao$ such
that:

\begin{equation}
f(x^*x)\leq p(x)^2, \quad \forall x\in \Ao, \label{44}
\end{equation}
for some continuous seminorm $p$ of $\tau$ and, denoting with
$\tilde f$ the continuous extension of $f$ to  $\A$, the following
inequality holds:

\begin{equation}
|\tilde f(\delta(x))|\leq C(\sqrt{f(x^*x)}+\sqrt{f(xx^*)}), \quad
\forall x\in \Ao, \label{45}
\end{equation}
for some positive constant $C$.

\vspace{2mm}

(iii) There exists a positive sesquilinear form $\varphi$ on
$\A\times\A$ such that:

$\varphi$ is invariant, i.e.

\begin{equation}
\varphi(ax,y)=\varphi(x,a^*y), \mbox{ for all } a\in \A \mbox{ and
} x,y\in\Ao; \label{46}
\end{equation}

$\varphi$ is $\tau$-continuous, i.e.

\begin{equation}
|\varphi(a,b)|\leq p(a) p(b), \mbox{ for all } a,b\in \A,
\label{47}
\end{equation}
for some continuous seminorm $p$ of $\tau$; and $\varphi$
satisfies the following inequality:

\begin{equation}
|\varphi(\delta(x),\1)|\leq
C(\sqrt{\varphi(x,x)}+\sqrt{\varphi(x^*,x^*)}), \quad \forall x\in
\Ao, \label{48}
\end{equation}
for some positive constant $C$. \label{theorem41}
\end{thm}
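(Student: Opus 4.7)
The plan is to prove the cycle (i)$\Rightarrow$(ii)$\Rightarrow$(iii)$\Rightarrow$(i); the first two steps are of GNS flavor and essentially computational, while the main work lies in (iii)$\Rightarrow$(i). For (i)$\Rightarrow$(ii) I would take the vector state $f(x):=\langle\pi(x)\xi_0,\xi_0\rangle$ on $\Ao$. Positivity is immediate, and the $(\tau,\tau_s)$-continuity of $\pi$ at $\xi_0$ gives $f(x^*x)=\|\pi(x)\xi_0\|^2\leq p(x)^2$, which is \eqref{44}. The continuous extension is $\tilde f(a)=\langle\pi(a)\xi_0,\xi_0\rangle$. Using the spatial formula \eqref{43}, I rewrite $\tilde f(\delta(x))=i\langle(H\circ\pi(x))\xi_0,\xi_0\rangle-i\langle(\pi(x)\circ H)\xi_0,\xi_0\rangle$; because $H\xi_0\in\Hil_\pi$ and $H=H^\dagger$, the two pairings become $i\langle\pi(x)\xi_0,H\xi_0\rangle$ and $-i\langle H\xi_0,\pi(x^*)\xi_0\rangle$ respectively, each bounded by $\|H\xi_0\|\sqrt{f(x^*x)}$ or $\|H\xi_0\|\sqrt{f(xx^*)}$. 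This yields \eqref{45} with $C=\|H\xi_0\|$.

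For (ii)$\Rightarrow$(iii), I set $\varphi(x,y):=f(y^*x)$ on $\Ao\times\Ao$. Cauchy--Schwarz combined with \eqref{44} gives $|\varphi(x,y)|\leq p(x)p(y)$, so $\varphi$ extends by $\tau$-continuity to $\A\times\A$ and satisfies \eqref{47}. Invariance \eqref{46} is first checked on $\Ao$ from the identity $\varphi(xy,z)=f(z^*xy)=f((x^*z)^*y)=\varphi(y,x^*z)$, and then extended to $a\in\A$ by taking a $\tau$-approximating net and using separate continuity of the multiplication. Since $\varphi(\delta(x),\1)=\tilde f(\delta(x))$, $\varphi(x,x)=f(x^*x)$ and $\varphi(x^*,x^*)=f(xx^*)$, the bound \eqref{48} is just \eqref{45} rewritten.

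For (iii)$\Rightarrow$(i), a GNS-type construction is needed. Let $N:=\{x\in\Ao:\varphi(x,x)=0\}$; invariance and Cauchy--Schwarz show that $N$ is a left ideal of $\Ao$. Set $\lambda(x):=x+N$, $\D_\pi:=\lambda(\Ao)$ with inner product $\langle\lambda(x),\lambda(y)\rangle:=\varphi(x,y)$, and complete to a Hilbert space $\Hil_\pi$. Define $\pi(x)\lambda(y):=\lambda(xy)$ for $x\in\Ao$, and, for $a\in\A$, set $\pi(a)\lambda(y):=\lim_\alpha\lambda(x_\alpha y)$ along any net $\{x_\alpha\}\subset\Ao$ with $x_\alpha\to a$ in $\tau$; the Cauchy property uses \eqref{47} together with separate continuity of the multiplication. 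Then $\xi_0:=\lambda(\1)$ is ultra-cyclic since $\pi(x)\xi_0=\lambda(x)$, and $\pi$ is $(\tau,\tau_s)$-continuous by \eqref{47}. To produce the implementing operator, I would use \eqref{48} to construct, via a Riesz-type argument, a vector $\eta_0\in\Hil_\pi$ satisfying the symmetry condition $\langle\lambda(x),\eta_0\rangle-\langle\eta_0,\lambda(x^*)\rangle=-i\varphi(\delta(x),\1)$ for every $x\in\Ao$; this $\eta_0$ will be $H\xi_0$. One then defines $H$ by $\langle H\lambda(x),\lambda(z)\rangle:=-i\varphi(\delta(x),z)+\langle\eta_0,\lambda(x^*z)\rangle$, which by \eqref{47} and \eqref{48} is jointly continuous and therefore gives $H\in\LL(\D_\pi,\D_\pi')$; the symmetry relation above is exactly $H=H^\dagger$, and the spatial identity \eqref{43} is verified on the ultra-cyclic set $\pi(\Ao)\xi_0$ using the Leibniz rule $\delta(xy)=\delta(x)y+x\delta(y)$.

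The main obstacle is the construction of $\eta_0$. The bound \eqref{48} controls $|\varphi(\delta(x),\1)|$ by a \emph{sum} of two Hilbert norms $\sqrt{\varphi(x,x)}+\sqrt{\varphi(x^*,x^*)}$ rather than by a single one, so a direct Riesz argument on $\Hil_\pi$ is not available. The remedy is to use $\delta(x^*)=\delta(x)^*$: decomposing $x$ into its self-adjoint and skew-adjoint parts $x=x_1+ix_2$, the quantity $\varphi(\delta(x),\1)$ splits into pieces each controlled by a single norm $\sqrt{\varphi(x_j,x_j)}$, which reduces the problem to a standard Riesz representation in $\Hil_\pi$. Verifying that the resulting $\eta_0$ is consistent, that the associated $H$ actually lies in $\LL(\D_\pi,\D_\pi')$, and that it implements $\delta_\pi$ on the ultra-cyclic domain is the technically delicate part.
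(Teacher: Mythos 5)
Your argument is correct and, for (i)$\Rightarrow$(ii) and (ii)$\Rightarrow$(iii), coincides with the paper's proof (vector state $f(x)=\langle\pi(x)\xi_0,\xi_0\rangle$ with $C=\|H\xi_0\|$; extension of $f(y^*x)$ to $\A\times\A$ by $\tau$-continuity). In (iii)$\Rightarrow$(i) your GNS skeleton is also the paper's (the paper quotients all of $\A$ by $\mathcal{N}_\varphi$ instead of extending $\pi$ from $\lambda(\Ao)$ by limits --- an immaterial difference), and your weakly defined $H$ is exactly the paper's operator (\ref{416}), $H\lambda_\varphi(x)=\frac{1}{i}\lambda_\varphi(\delta(x))+\hat\pi_\varphi(x)\eta$. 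The genuine divergence is how the vector $\eta_0=H\xi_0$ is extracted from (\ref{48}). The paper uses the Bratteli--Robinson doubling trick: it works in $\Hil_\varphi\oplus\overline{\Hil_\varphi}$ (the conjugate space is what makes $x\mapsto\{\lambda_\varphi(x),\lambda_\varphi(x^*)\}$ complex-linear), applies Riesz to the functional $i\varphi(\delta(x),\1)$ on the span of these pairs, and then antisymmetrizes using the invariance (\ref{46}) to obtain $\eta=(\xi_2-\xi_1)/2$ satisfying (\ref{414}). Your route --- split $x=x_1+ix_2$ into self-adjoint parts and do Riesz inside $\Hil_\varphi$ --- also works, but two points left implicit must be supplied: (a) for $y=y^*$ the number $\varphi(\delta(y),\1)$ is real (use (\ref{46}) with $a=\delta(y)$, $x=y=\1$, plus hermiticity of the positive form $\varphi$); this is in fact necessary for your antisymmetry identity to be solvable; and (b) the functional $\lambda_\varphi(y)\mapsto\varphi(\delta(y),\1)$ is only real-linear on the real subspace of self-adjoint classes, so the ``standard Riesz'' you invoke is really Riesz in the real Hilbert space $(\Hil_\varphi,\mathrm{Re}\langle\cdot,\cdot\rangle)$, giving $\zeta$ with $\varphi(\delta(y),\1)=\mathrm{Re}\langle\lambda_\varphi(y),\zeta\rangle$; a suitable multiple of $i\zeta$ (e.g.\ $\eta_0=\tfrac{i}{2}\zeta$) then satisfies your identity, after which symmetry of $H$ and (\ref{43}) are verified exactly as in the paper. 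What each buys: the paper's doubling keeps every functional complex-linear and is the verbatim generalization of the C*-algebra argument it cites; yours avoids the auxiliary space at the price of the realness check and the real-linear Riesz step. One caveat common to both: the continuity of $H$ from $\D_\pi[t_\dagger]$ into $\D_\pi'$, i.e.\ membership in $\LL(\D_\pi,\D_\pi')$, is not really argued --- your appeal to (\ref{47})--(\ref{48}) does not by itself give joint $t_\dagger$-continuity of the form $\langle H\lambda_\varphi(x),\lambda_\varphi(z)\rangle$ --- but the paper is equally brief on this point.
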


\begin{proof} First we show that (i) implies (ii).

We recall that the ultra-cyclicity of the vector $\xi_0$ means
that $\D_\pi=\pi(\Ao)\xi_0$. Therefore, the map defined as

\begin{equation}
f(x)=<\pi(x)\xi_0,\xi_0>, \quad x\in \Ao, \label{49}
\end{equation}
is a positive linear functional on $\Ao$. Moreover, since
$f(x^*x)=\|\pi(x)\xi_0\|^2$, equation (\ref{44}) follows because
of the $(\tau-\tau_s)$-continuity of $\pi$. As for equation
(\ref{45}), it is clear first of all that $f$ has a unique
extension to $\A$ defined as
\begin{equation}
\tilde f(a)=<\pi(a)\xi_0,\xi_0>, \quad a\in \A, \label{410}
\end{equation}
due the $(\tau-\tau_s)$-continuity of $\pi$. Therefore we have,
using (\ref{43}),
\begin{eqnarray*}
|\tilde f(\delta(x))|&=&|<H\circ
\pi(x)\xi_0,\xi_0>-<H\xi_0,\pi(x^*)\xi_0>| \\
&\leq&\|H\xi_0\|\left(<\pi(x)\xi_0,\pi(x)\xi_0>^{1/2}+<\pi(x^*)\xi_0,\pi(x^*)\xi_0>^{1/2}\right),
\end{eqnarray*}
so that inequality (\ref{45}) follows with $C= \|H\xi_0\|$.

\vspace{3mm}

Let us now prove that (ii) implies (iii). For that we define a
sesquilinear form $\varphi$ in the following way: let $a,b$ be in
$\A$ and let $\{x_\alpha\},\{y_\beta\}$ be two nets in $\Ao$,
$\tau$-converging respectively to $a$ and $b$. We put

\begin{equation}
\varphi(a,b)=\lim_{\alpha,\beta}f(y_\beta^*x_\alpha). \label{411}
\end{equation}

It is readily checked that $\varphi$ is well-defined. The proofs
of (\ref{46}), (\ref{47}) and (\ref{48}) are easy consequences of
definition (\ref{411}) together with the properties of $f$.

To conclude the proof, we still have to check that (iii) implies
(i).

Given $\varphi$ as in (iii) above, we consider the
GNS-construction generated by $\varphi$.

Let ${\cal N}_\varphi=\{a\in\A; \varphi(a,a)=0\}$, then $\A/{\cal
N}_\varphi=\{\lambda_\varphi(a)=a+{\cal N}_\varphi, a\in\A\}$ is a
pre-Hilbert space with inner product
$<\lambda_\varphi(a),\lambda_\varphi(b)>=\varphi(a,b)$,
$a,b\in\lambda_\varphi(\A)$. We call $\Hil_\varphi$ the completion
of $\lambda_\varphi(\A)$ in the norm $\|.\|_\varphi$ given by this
inner product. It is easy to check that $\lambda_\varphi(\Ao)$ is
$\|.\|_\varphi$-dense in $\Hil_\varphi$. In fact, due to the
definition of locally convex quasi *-algebra, given $a\in\A$,
there exists a net ${x_\alpha}\subset \Ao$ such that
$x_\alpha\stackrel{\tau}{\rightarrow} a$. Therefore we have, using
the continuity of $\varphi$
$$
\|\lambda_\varphi(a)-\lambda_\varphi(x_\alpha)\|^2_\varphi=\|\lambda_\varphi(a-x_\alpha)\|^2_\varphi=\varphi(a-x_\alpha,a-x_\alpha)\leq
p(a-x_\alpha)^2\rightarrow 0.
$$
We can now define a  *-representation $\pi_\varphi$ with
ultra-cyclic vector $\lambda_\varphi(\1)$ as follows:
\begin{equation}
\pi_\varphi(a)\lambda_\varphi(x)=\lambda_\varphi(ax), \quad
a\in\A, x\in\Ao. \label{412}
\end{equation}
In particular, the fact that $\lambda_\varphi(\1)$ is ultra-cyclic
follows from the fact that
$\pi_\varphi(\Ao)\lambda_\varphi(\1)=\lambda_\varphi(\Ao)$  is
dense in $\Hil_\varphi$. Moreover the representation $\pi_\varphi$
is also $(\tau-\tau_s)$-continuous; in fact, taking $a\in\A$ and
$x\in\Ao$, we have:
$$
\|\pi_\varphi(a)\lambda_\varphi(x)\|^2_\varphi=\|\lambda_\varphi(ax)\|^2_\varphi=\varphi(ax,ax)\leq
(p(ax))^2\leq \gamma_x(p'(a))^2.
$$
The last inequality follows from the continuity of the
multiplication. This inequality shows that whenever
$\tau-\lim_{\alpha}x_\alpha=a$, then
$\tau_s-\lim_{\alpha}\pi_\varphi(x_\alpha)=\pi_\varphi(a)$.

This construction produces a *-representation $\pi_\varphi$ with
all the properties required to $\pi$ in (i). As a consequence, we
can define a *-derivation $\delta_{\pi_\varphi}$ induced by
$\pi_\varphi$ as in (\ref{41}):
$\delta_{\pi_\varphi}(\pi_\varphi(x))=\pi_\varphi(\delta(x))$, for
$x\in\Ao$. The proof of the spatiality of $\delta_{\pi_\varphi}$
generalizes the proof of the analogous statement for C*-algebras
(see, e.g. \cite{brarob}).

Let $\overline{\Hil_\varphi}$ be the conjugate space of
$\Hil_\varphi$, with inner product
$$
<{\lambda_\varphi(x)},{\lambda_\varphi(y)}>_{\overline{\Hil_\varphi}}=<\lambda_\varphi(y),\lambda_\varphi(x)>_{\Hil_\varphi}.
$$
>From now on we will indicate with the same symbol $<.,.>$ all the
inner products, whenever no possibility of confusion arises.

Let ${\cal M}_\varphi$ be the subspace of
$\Hil_\varphi\oplus\overline{\Hil_\varphi}$ spanned by the vectors
$\{\lambda_\varphi(x),\lambda_\varphi(x^*)\}$, $x\in\Ao$. We
define a linear functional $F_\varphi$ on ${\cal M}_\varphi$ by
\begin{equation}
F_\varphi(\{\lambda_\varphi(x),\lambda_\varphi(x^*)\})=i\varphi(\delta(x),\1),
\quad x\in\Ao. \label{413}
\end{equation}
Inequality (\ref{48}), together with the equality
$\|\{\lambda_\varphi(x),\lambda_\varphi(x^*)\}\|^2=\varphi(x,x)+\varphi(x^*,x^*)$,
shows that $f_\varphi$ is indeed continuous, so that by Riesz's
Lemma, there exists a vector $\{\xi_1,\xi_2\}\in
\Hil_\varphi\oplus\overline{\Hil_\varphi}$ such that
$$
F_\varphi(\{\lambda_\varphi(x),\lambda_\varphi(x^*)\})=<\{\lambda_\varphi(x),\lambda_\varphi(x^*)\},\{\xi_1,\xi_2\}>
=<\lambda_\varphi(x),\xi_1>+<\xi_2,\lambda_\varphi(x^*)>.
$$

Using the invariance of $\varphi$ we also deduce that

 $$
F_\varphi(\{\lambda_\varphi(x),\lambda_\varphi(x^*)\})=i\varphi(\delta(x),\1)=-i\overline{\varphi(\delta(x^*),\1)},
$$
which, together with the previous result, gives
\begin{equation}
\frac{1}{i}\varphi(\delta(x),\1)=<\lambda_\varphi(x),\eta>-<\eta,\lambda_\varphi(x^*)>,
\quad x\in\Ao, \label{414}
\end{equation}
where we have introduced the vector $\eta$ as
\begin{equation}
\eta=\frac{\xi_2-\xi_1}{2}. \label{415}
\end{equation}
Now we define the operator $H$ in the following way:
\begin{equation}
H\lambda_\varphi(x)=\frac{1}{i}\lambda_\varphi(\delta(x))+\hat\pi_\varphi(x)\eta,
\quad x\in\Ao, \label{416}
\end{equation}
where $\hat\pi_\varphi$ indicates the extension of $\pi_\varphi$,
defined in the usual way, which we need to introduce since $\eta$
belongs to $\Hil_\varphi$ and not to $\D_{\pi_\varphi}$, in
general.

First of all, we notice that from (\ref{416})
$H\lambda_\varphi(\1)=\eta\in\Hil_\varphi$, as stated in (i).
Moreover, $H$ is also well-defined and symmetric since for all
$x,y\in\Ao$
\begin{eqnarray*}
\lefteqn{<H\pi_\varphi(x)\lambda_\varphi(\1),\pi_\varphi(y)\lambda_\varphi(\1)>-<\pi_\varphi(x)\lambda_\varphi(\1),H\pi_\varphi(y)\lambda_\varphi(\1)>}
\\
&=&<H\lambda_\varphi(x),\lambda_\varphi(y)>-<\lambda_\varphi(x),H\lambda_\varphi(y)> \\
&=&<\left(\frac{1}{i}\lambda_\varphi(\delta(x))+\hat\pi_\varphi(x)\eta\right),\lambda_\varphi(y)>-<\lambda_\varphi(x),\left(\frac{1}{i}\lambda_\varphi(\delta(y))+\hat\pi_\varphi(y)\eta\right)>\\
&=&\frac{1}{i}\left(\varphi(\delta(x),y)+\varphi(x,\delta(y))\right)+<\hat\pi_\varphi(x)\eta,\lambda_\varphi(y)>-<\lambda_\varphi(x),\hat\pi_\varphi(y)\eta>\\
&=&\frac{1}{i}\varphi(\delta(y^*x),\1)+<\eta,\lambda_\varphi(x^*y)>-<\lambda_\varphi(y^*x),\eta>=0.
\end{eqnarray*}
This last equality follows from equation (\ref{414}). We finally
have to prove that $H$ implements the derivation
$\delta_{\pi_\varphi}$. For this, let $x,y,z\in\Ao$. Then we have
\begin{eqnarray*}
\lefteqn{i(<H\circ\pi_\varphi(x)\lambda_\varphi(y),\lambda_\varphi(z)>-<\pi_\varphi(x)\circ
H\lambda_\varphi(y),\lambda_\varphi(y)>)} \\
&=&i(<H\lambda_\varphi(xy),\lambda_\varphi(z)>-<
H\lambda_\varphi(y),\lambda_\varphi(x^*y)>) \\
&=&i\left(<\frac{1}{i}\lambda_\varphi(\delta(xy))+\hat\pi_\varphi(xy)\eta,\lambda_\varphi(z)>-<
\frac{1}{i}\lambda_\varphi(\delta(y))+\hat\pi_\varphi(y)\eta,\lambda_\varphi(x^*z)>\right)\\
&=&\varphi(\delta(x)y,z)=<\pi_\varphi(\delta(x))\lambda_\varphi(\delta(y)),\lambda_\varphi(\delta(z))>.
\end{eqnarray*}
Again, we made use of equation (\ref{414}).
\end{proof}

\vspace{3mm} {\bf Remark:--} If we add to a spatial *-derivation
$\delta_0$ a {\em perturbation} $\delta_p$ such that
$\delta=\delta_0+\delta_p$ is again a *-derivation, it is
reasonable to consider the question as to whether $\delta$ is
still spatial. The answer is positive under very general (and
natural) assumptions: since $\delta_0$ is spatial, the above
Proposition states that there exists a positive linear functional
$f$ on $\Ao$ whose extension $\tilde f$ satisfies, among the
others, inequality (\ref{45}): $|\tilde f(\delta_0(x))|\leq
C(\sqrt{f(x^*x)}+\sqrt{f(xx^*)})$, for all $x\in\Ao$. If we
require that $\delta_p$ satisfies the inequality $|\tilde
f(\delta_p(x))|\leq |\tilde f(\delta_0(x))|$, for all $x\in\Ao$,
which is exactly what we expect since $\delta_p$ is {\em small
compared to } $\delta_0$, we first deduce that $\delta_p$ is
spatial and, since, for all $x\in\Ao$, $|\tilde f(\delta(x))|\leq
2C(\sqrt{f(x^*x)}+\sqrt{f(xx^*)})$, using the same Proposition we
deduce that $\delta$ is spatial too. If  $H, H_0$ and $H_p$ denote
the operators that implement, respectively,  $\delta, \delta_0$
and $\delta_p$, we also get the equality
$i[H,A]\psi=i[H_0+H_p,A]\psi$, for all $A\in\LL^\dagger(\D_\pi)$
and $\psi\in\D_\pi$.

\vspace{3mm} The problem of the spatiality of a derivation is
particularly interesting when dealing with quantum systems with
infinite degrees of freedom. The reason is that for these systems
we need to introduce a regularizing cut-off in their descriptions
and remove this cut-off only at the very end. Specifically,
something like this can happen: the physical system ${\cal S}$ is
associated to, say, the whole space ${\mathbb R}^3$. In order to
describe the dynamics of ${\cal S}$ the canonical approach (see
\cite{brarob} and references therein)  consists in considering a
subspace $V\subset {\mathbb R}^3$, the physical system ${\cal
S}_V$ which naturally lives in this region, and to write down the
so-called local hamiltonian $H_V$ for ${\cal S}_V$. This
hamiltonian is a self-adjoint bounded operator which implements
the infinitesimal dynamics $\delta_V$ of ${\cal S}_V$. To obtain
information about the dynamics for ${\cal S}$ we need to compute a
limit (in $V$) to {\em remove the cutoff}. This can be a problem
already at this infinitesimal level (see also \cite{bagtra} and
references therein) and becomes harder and harder, in general,
when the interest is moved to the finite form of the algebraic
dynamics, that is, when we try to integrate the derivation. Among
the other things, for instance, it may happen that the net $H_V$
or the related net $\delta_V$ (or both), does not converge in any
reasonable topology, or that $\delta_V$ is not spatial. Another
possibility that may occur is the following: $H_V$ converges (in
some topology) to a certain operator $H$, $\delta_V$ converges (in
some other topology) to a certain *-derivation $\delta$, but
$\delta$ is not spatial or, even if it is, $H$ is not the operator
which implements $\delta$.

However, under some reasonable conditions, all these possibilities
can be controlled. The situation is governed by the next
Proposition, which is based on the assumption that there exists a
$(\tau-\tau_s)$-continuous *-representation $\pi$ in the Hilbert
space $\Hil_\pi$, which is ultra-cyclic  with ultra-cyclic vector
$\xi_0$, and a family of *-derivations (in the sense of Definition
\ref{Definition 3.1}) $\{\delta_n: \, n\in \mathbb N\}$ of the
 *-algebra $\Ao$ with identity. We define
 a related family of *-derivations $\delta_\pi^{(n)}$
induced by $\pi$ defined on $\pi(\Ao)$ and with values in
$\pi(\A)$:

\begin{equation}
 \delta_\pi^{(n)}(\pi(x))=\pi(\delta_n(x)), \quad x\in\Ao.
\label{417}
\end{equation}

\begin{prop}

Suppose that:

\begin{itemize}
\item[(i)] $\{\delta_n(x)\}$ is $\tau$-Cauchy for all $x\in\Ao$;
\item[(ii)] For each $n\in {\mathbb N}$,  $\delta_\pi^{(n)}$ is spatial, that is, there exists an operator $H_n$ such that $$H_n=H_n^\dagger\in \LL(\D_\pi,\D_\pi'),$$
$H_n\xi_0\in\Hil_\pi$ and
$\delta_\pi^{(n)}(\pi(x))=i\{H_n\circ\pi(x)-\pi(x)\circ H_n\},
\forall x\in\Ao$;
\item[(iii)] $$\sup_n\|H_n\xi_0\|=:L<\infty.$$
\end{itemize}
Then:
\begin{itemize}
\item[(a)] $\exists \, \delta(x)=\tau-\lim\delta_n(x)$, for all $x\in\Ao$, which is a *-derivation of $\Ao$;
\item[(b)] $\delta_\pi$, the *-derivation induced by $\pi$, is well-defined and spatial;
\item[(c)] if $H$ is the self-adjoint operator which implements $\delta_\pi$, if $<(H_n-H)\xi_0,\xi>\rightarrow 0$ for all $\xi\in D_\pi$ then $H_n$ converges weakly to $H$.
\end{itemize}
\label{theorem42}
\end{prop}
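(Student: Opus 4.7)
The argument divides into the three parts (a), (b), (c) as stated.

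For (a), since $\A[\tau]$ is complete, hypothesis (i) produces $\delta(x):=\tau$-$\lim_n\delta_n(x)\in\A$ for each $x\in\Ao$. The defining identities of a *-derivation are all inherited by taking $\tau$-limits: linearity is immediate, the *-property uses $\tau$-continuity of the involution, and the Leibniz rule passes from $\delta_n(xy)=x\delta_n(y)+\delta_n(x)y$ to the limit thanks to the separate continuity of left and right multiplication in $(\A,\Ao)$.

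For (b), first note that $\delta_\pi$ is well-defined: if $\pi(x)=0$, the standing assumption gives $\pi(\delta_n(x))=0$ for every $n$, and $(\tau-\tau_s)$-continuity of $\pi$ combined with (a) yields $\pi(\delta(x))=0$. For spatiality, set $f(x):=\langle\pi(x)\xi_0,\xi_0\rangle$ on $\Ao$; by $(\tau-\tau_s)$-continuity of $\pi$ this extends to a $\tau$-continuous $\tilde f$ on $\A$ by $\tilde f(a)=\langle\pi(a)\xi_0,\xi_0\rangle$. Applying (i)$\Rightarrow$(ii) of Theorem \ref{theorem41} to each $\delta_\pi^{(n)}$ with implementing $H_n$ yields
\begin{equation*}
|\tilde f(\delta_n(x))|\leq\|H_n\xi_0\|\bigl(\sqrt{f(x^*x)}+\sqrt{f(xx^*)}\bigr)\leq L\bigl(\sqrt{f(x^*x)}+\sqrt{f(xx^*)}\bigr),
\end{equation*}
uniformly in $n$ thanks to hypothesis (iii). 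Since $\tilde f$ is $\tau$-continuous and $\delta_n(x)\stackrel{\tau}{\to}\delta(x)$, passing to the limit gives the same bound for $\delta$ with constant $L$. The implication (ii)$\Rightarrow$(i) of Theorem \ref{theorem41} then produces a spatial implementation for $\delta$; since $\pi$ itself is ultra-cyclic with vector $\xi_0$, it is unitarily equivalent to the GNS representation built from $f$, so the implementing operator can be transported to an $H=H^\dagger\in\LL(\D_\pi,\D_\pi')$ with $H\xi_0\in\Hil_\pi$ satisfying (\ref{43}).

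For (c), the plan is to bootstrap the hypothesized convergence on $\xi_0$ to all of $\D_\pi$ by exploiting ultra-cyclicity $\D_\pi=\pi(\Ao)\xi_0$. Writing (\ref{43}) for $H_n$ and evaluating at $\xi_0$, one obtains
\begin{equation*}
H_n\pi(x)\xi_0=\hat\pi(x)(H_n\xi_0)-i\pi(\delta_n(x))\xi_0,
\end{equation*}
where $\hat\pi(x)$ is the canonical extension of $\pi(x)$ to $\D_\pi'$, and analogously for $H$ and $\delta$. Subtracting and pairing against $\eta\in\D_\pi$ gives
\begin{equation*}
\langle(H_n-H)\pi(x)\xi_0,\eta\rangle=\langle(H_n-H)\xi_0,\pi(x)^\dagger\eta\rangle-i\langle\pi(\delta_n(x)-\delta(x))\xi_0,\eta\rangle.
\end{equation*}
The first term tends to $0$ by the hypothesis of (c) applied to the test vector $\pi(x)^\dagger\eta\in\D_\pi$, and the second by $(\tau-\tau_s)$-continuity of $\pi$ together with (a). Since every $\xi\in\D_\pi$ has the form $\pi(x)\xi_0$, this establishes the required weak convergence $H_n\to H$ on $\D_\pi$.

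The main obstacle is the identification step in (b): one must check that the operator produced by the abstract (ii)$\Rightarrow$(i) implication of Theorem \ref{theorem41}, which a priori lives in the GNS representation associated to $f$, can be realized inside the given $\pi$. A clean workaround is to re-run the construction of Theorem \ref{theorem41} directly in $\pi$, using the sesquilinear form $\varphi(a,b):=\langle\pi(a)\xi_0,\pi(b)\xi_0\rangle$ on $\A\times\A$ and building $H$ on $\D_\pi=\pi(\Ao)\xi_0$ exactly as in formula (\ref{416}). Apart from this bookkeeping, parts (a) and (c) are straightforward limit arguments powered by the ultra-cyclicity of $\xi_0$.
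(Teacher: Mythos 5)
Your proposal is correct and follows essentially the paper's own route: parts (a) and (c) are the same limit computations, and in (b) you verify the criterion of Theorem \ref{theorem41} with the form $\varphi(a,b)=\langle\pi(a)\xi_0,\pi(b)\xi_0\rangle$ and the uniform bound $L$, then transfer the implementing operator back to $\pi$ via unitary equivalence with the GNS representation, exactly as the paper does by setting $H=U^{-1}H_\varphi U|_{\D_\pi}$. The small deviations (deducing well-definedness of $\delta_\pi$ directly from the $(\tau-\tau_s)$-continuity, and organizing (c) through the implementation identity evaluated at $\xi_0$ rather than through $\varphi$) are cosmetic.
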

\begin{proof} (a) This first statement is trivial.

\vspace{3mm} (b) For $a,b\in\A$ we put
$\varphi(a,b)=<\pi(a)\xi_0,\pi(b)\xi_0>$. Then $\varphi$ is an
invariant positive sesquilinear form on $\A\times\A$, since:
$$
\varphi(ax,y)=<\pi(ax)\xi_0,\pi(y)\xi_0>=<\pi(a)\pi(x)\xi_0,\pi(y)\xi_0>=
<\pi(x)\xi_0,\pi(a^*)\pi(y)\xi_0>=\varphi(x,a^*y),
$$
for all $a\in\A$ and $x,y\in\Ao$. $\varphi$ is $\tau$-continuous:
if $a,b\in\A$
$$
|\varphi(a,b)|=|<\pi(a)\xi_0,\pi(b)\xi_0>|\leq
\|\pi(a)\xi_0\|\|\pi(b)\xi_0\|\leq p_\alpha(a) p_\alpha(b),
$$
for some continuous seminorm $p_\alpha$ on $\A$, because of the
$(\tau-\tau_s)$-continuity of $\pi$.

>From this inequality we deduce that, for $x\in\Ao$,
\begin{eqnarray*}
|\varphi(\delta(x),\1)|&=&\lim_n|\varphi(\delta_n(x),\1)|=\lim_n|<H_n\circ
\pi(x)\xi_0,\xi_0>-<\pi(x)\circ H_n\xi_0,\xi_0>| \\
&=&\limsup_n|<H_n\circ \pi(x)\xi_0,\xi_0>-<\pi(x)\circ
H_n\xi_0,\xi_0>| \\ &\leq& \limsup_n
\|H_n\xi_0\|(\|\pi(x)\xi_0\|+\|\pi(x^*)\xi_0\|) \\
&\leq& L (\sqrt{\varphi(x,x)}+\sqrt{\varphi(x^*,x^*)}).
\end{eqnarray*}

This sesquilinear form $\varphi$ satisfies all the conditions
required in  (iii) of Theorem \ref{theorem41}. Then, following the
same steps as in the proof of Theorem \ref{theorem41}, (iii)
$\Rightarrow$ (i), we construct the GNS-representation
$\pi_\varphi$ associated to $\varphi$. We call $\Hil_\varphi,
\xi_\varphi$ and $H_\varphi$  respectively the Hilbert space, the
ultra-cyclic vector and the symmetric operator implementing the
derivation associated to  $\pi_\varphi$. Among others, the
following equality must be satisfied:
\begin{equation}
\varphi(a,b)=<\pi(a)\xi_0,\pi(b)\xi_0>=<\pi_\varphi(a)\xi_\varphi,\pi_\varphi(b)\xi_\varphi>,
\quad \forall a,b\in \A, \label{418}
\end{equation}
which implies that $\pi_\varphi$ and $\pi$ are unitarily
equivalent, that is, there exists a unitary operator
$U:\Hil_\pi\rightarrow \Hil_\varphi$ such that
$U\xi_0=\xi_\varphi$, $U\pi(a)U^{-1}=\pi_\varphi(a)$, $\forall
a\in\A$, and $U$ is continuous from $D_\pi[t_\pi]$ into
$D_\varphi[t_\varphi]$. We prove here only this last property. Let
$x,y\in\Ao$; we have
$$
\|\pi_\varphi(y)U\pi(x)\xi_0\|_\varphi=\|U\pi(y)\pi(x)\xi_0\|_\varphi=\|\pi(y)\pi(x)\xi_0\|,
$$
which implies that $U^*$ can be extended to an operator
$U^\dagger:\D_\varphi'\rightarrow\D_\pi'$. We have now
$$
\delta_{\pi_\varphi}(\pi_\varphi(x))=\pi_\varphi(\delta(x))=U
\pi(\delta(x))U^{-1}=U \delta_{\pi}(\pi(x))U^{-1},
$$
which implies that
$\delta_{\pi}(\pi(x))=U^{-1}\delta_{\pi_\varphi}(\pi_\varphi(x))U$.
Since $\delta_{\pi_\varphi}$ is well-defined, this equality
implies that also $\delta_\pi$ is well-defined. Indeed we have:
$$
\pi(x)=0 \Rightarrow \pi_\varphi(x)=0 \Rightarrow
\delta_{\pi_\varphi}(\pi_\varphi(x))=0 \Rightarrow
\delta_{\pi}(\pi(x))=0.
$$
Now we define $H=U^{-1}H_\varphi U|_{\D_\pi}$. Then
\begin{eqnarray*}
\delta_{\pi}(\pi(x))&=&U^{-1}\delta_{\pi_\varphi}(\pi_\varphi(x))U
=iU^{-1}\left(H_\varphi\circ\pi_\varphi(x)-\pi_\varphi(x) \circ
H_\varphi\right)U \\&=&i\left(U^{-1}H_\varphi U\circ
U^{-1}\pi_\varphi(x)U-U^{-1}\pi_\varphi(x)U\circ U^{-1}H_\varphi
U\right)\\ &=& i\left(H\circ\pi(x)-\pi(x)\circ H\right),
\end{eqnarray*}
which allows us to conclude.

\vspace{3mm}

(c) For $x,y,z\in\Ao$ we have, using the definition of $\varphi$
and its $\tau$-continuity,
\begin{eqnarray*}
\varphi(\delta_n(x)y,z)&=&<\delta_\pi^{(n)}(\pi(x))\pi(y)\xi_0,\pi(z)\xi_0>
\\
&=& i\left(<(H_n\circ \pi(x))\pi(y)\xi_0,\pi(z)\xi_0>-<(
\pi(x)\circ H_n)\pi(y)\xi_0,\pi(z)\xi_0>\right)\\ & \rightarrow
&\varphi(\delta(x)y,z).
\end{eqnarray*}

Since $<( \pi(x)\circ H_n)\pi(y)\xi_0,\pi(z)\xi_0>=<
H_n\pi(y)\xi_0,\pi(x^*z)\xi_0>$, we deduce that, taking $y=\1$,
$<( \pi(x)\circ H_n)\xi_0,\pi(z)\xi_0>= <
H_n\xi_0,\pi(x^*z)\xi_0>\rightarrow < H\xi_0,\pi(x^*z)\xi_0>$,
because of the assumption on $H_n$. Then, by difference with
$$\varphi(\delta(x),z)=i{<(H\circ\pi(x))\xi_0,\pi(z)\xi_0>-<(\pi(x)\circ
H)\xi_0,\pi(z)\xi_0>},$$ we get that
$<(H_n\pi(x))\xi_0,\pi(z)\xi_0>\rightarrow
<(H\pi(x))\xi_0,\pi(z)\xi_0>$, for all $x,z\in\Ao$. Then $H_n$
converges to $H$ weakly.

\end{proof}

\vspace{5mm} {\bf Example 1:} A radiation model

\vspace{2mm} In this example the representation $\pi$ is just the
identity map. Let us consider a model of $n$ free bosons,
\cite{bagjmp}, whose dynamics is given by the hamiltonian,
$H=\sum_{i=1}^n a_i^\dagger a_i$. Here $a_i$ and $a_i^\dagger$ are
respectively  the annihilation and creation operators for the
$i$-th mode. They satisfy the following CCR
\begin{equation}
[a_i,a_j^\dagger]=\1\delta_{i,j}. \label{419}
\end{equation}
Let $Q_L$ be the projection operator on the subspace of $\Hil$
with at most $L$ bosons. This operator can be written considering
the spectral decomposition of $H_{(i)}=a_i^\dagger
a_i=\sum_{l=0}^\infty l E_l^{(i)}$. We have
$Q_L=\sum_{i=1}^n\sum_{l=0}^L E_l^{(i)}$. Let us now define a
bounded operator $H_L$ in $\Hil$ by $H_L=Q_LHQ_L$. It is easy to
check that, for any vector $\Phi_M$ with $M$ bosons (i.e., an
eigenstate of the number operator $N=H=\sum_{i=1}^n a_i^\dagger
a_i$ with eigenvalue $M$), the condition
$\sup_L\|H_L\Phi_M\|<\infty$ is satisfied. In particular, for
instance, $\sup_L\|H_L\Phi_0\|=0$. It may be worth remarking that
all the vectors $\Phi_M$ are cyclic. Denoting with $\delta_L$ the
derivation implemented by $H_L$ and $\delta$ the one implemented
by $H$, it is clear that all the assumptions of the previous
Proposition are satisfied, so that, in particular, the weak
convergence of $H_L$ to $H$ follows. This is not surprising since
it is known that $H_L$ converges to $H$ strongly on a dense
domain, \cite{bagjmp}. .

\vspace{5mm} {\bf Example 2:} A mean-field spin model

\vspace{2mm} The situation described here is quite different from
the one in the previous example. First of all, \cite{bagarello,
bagatrap}, there exists no hamiltonian for the whole physical
system but only for a finite volume subsystem:
$H_V=\frac{1}{|V|}\sum_{i,j\in V}\sigma_3^i\sigma_3^j$, where $i$
and $j$ are the indices of the lattice site, $\sigma_3^i$ is the
third component of the Pauli matrices, $V$ is the volume cut-off
and $|V|$ is the number of the lattice sites in $V$. It is
convenient to introduce the {\em mean magnetization} operator
$\sigma_3^V=\frac{1}{|V|}\sum_{i\in V}\sigma_3^i$. Let us indicate
with $\uparrow_i$ and $\downarrow_i$ the eigenstates of
$\sigma_3^i$ with eigenvalues $+1$ and $-1$, respectively. We
define $\Phi_\uparrow=\otimes_{i\in V}\uparrow_i$. It is clear
that $\sigma_3^V\Phi_\uparrow=\Phi_\uparrow$, which implies that
$H_V\Phi_\uparrow=|V|\Phi_\uparrow$, which in turns implies that
$\sup_V\|H_V\Phi_\uparrow\|=\infty$. This means that the cyclic
vector $\Phi_\uparrow$ does not satisfy the main assumption of
Proposition \ref{theorem42}, and for this reason nothing can be
said about the convergence of $H_V$. However, it is possible to
consider a different cyclic vector
$$
\Phi_0=....\otimes \uparrow_{j-1} \otimes \downarrow_j\otimes
\uparrow_{j+1}\otimes \downarrow_{j+2}\otimes...,
$$
which is again an eigenstate of $\sigma_3^V$. Its eigenvalue
depends on the volume $V$. However, it is clear that
$\|\sigma_3^V\Phi_0\|=\frac{1}{|V|}\|\Phi_0\|\epsilon_V$, where
$\epsilon_V$ can take only the values $0, 1$. Analogously we have
$\|H_V\Phi_0\|=\frac{1}{|V|}\|\Phi_0\|\epsilon_V^2\rightarrow 0$.
This means that this vector satisfies the assumptions of
Proposition \ref{theorem42}, so that the derivation
$\delta_V(.)=i[H_V,.]$ converges to a derivation $\delta$ which is
spatial and implemented by $H$, and that $H_V$ is weakly
convergent to $H$.

As we see, contrary to the previous example, the choice of the
cyclic vector which we take as our starting point, is very
important in order to be able to prove the existence of $\delta$,
its spatiality and convergence of $H_V$ to a limit operator. It is
also worth remarking that the same conclusions could also be found
replacing $\Phi_0$ with any vector which can be obtained as a
local perturbation of $\Phi_0$ itself.

\vspace{3mm} {\bf Remark:--} All the results we have proved above
can be specialized to a CQ*-algebras, which can be considered as
particular example of locally convex quasi *-algebras. The main
difference in this case concerns statement $(c)$ of Proposition
\ref{theorem42}: the weak convergence of $H_n$ to $H$, in this
case, is replaced by a strong convergence. More in details,
referring to the Example of Section 2 and calling
$\Omega\in\Hil_{+1}$ a cyclic vector, we can prove that, if
$\|(H_n-H)\Omega\|_{-1}\rightarrow 0$, then
$\|(H_n-H)A\Omega\|_{-1}\rightarrow 0$ for all $A\in
B(\Hil_{+1})$.

\vspace{5mm} The following result gives an interplay between the
results of this and of the previous sections. In particular, we
consider now the possibility of extending the domain of definition
of the derivation $\delta$ (as we did in Section 3) defined as a
limit of a net of derivations $\delta_n$ (as we have done in this
section). For this we first need the following definition:

\begin{defn}
Let $(\A[\tau],\Ao)$ be a locally convex quasi *-algebra. A
sequence $\{\delta_n\}$ of *-derivations is called uniformly
$\tau$-continuous if, for any continuous seminorm $p$ on $\A$,
there exists a continuous seminorm $q$ on $\A$ such that

\begin{equation}
p(\delta_n(x))\leq q(x), \quad\forall x\in\Ao, \forall n\in
\mathbb{N}. \label{420}
\end{equation}
\label{Definition 4.1}
\end{defn}

We can now prove the following

\begin{prop}

Let $\delta$ be the $\tau$-limit of a  uniformly $\tau$-continuous
sequence  $\{\delta_n\}$ of *-derivations such that the set

\begin{equation}
\D(\delta)=\{x\in\Ao: \exists \,\tau-\lim_n\delta_n(x)\}
\label{421}
\end{equation}
is $\tau$-dense in $\Ao$. Then, $\delta$ is a *-derivation and,
denoting with $\tilde\delta_n$ the continuous extension of
$\delta_n$ to $\A$, we have: $\{x\in\A: \exists
\tau-\lim_n\tilde\delta_n(x)\}=\A$.

\end{prop}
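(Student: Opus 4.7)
The plan is to run a three-stage Cauchy-plus-density argument powered by the uniform estimate (\ref{420}). First I would observe that (\ref{420}) makes each individual $\delta_n$ $\tau$-continuous on $\Ao$, so by $\tau$-density of $\Ao$ in $\A$ it extends uniquely to a continuous linear map $\tilde\delta_n:\A\to\A$. Taking $\tau$-limits of nets $\{x_\alpha\}\subset\Ao$ one then sees that the same pair $(p,q)$ still witnesses
\[
p(\tilde\delta_n(a))\le q(a), \qquad\forall\,a\in\A,\ \forall\,n\in\mathbb{N},
\]
so the extended family $\{\tilde\delta_n\}$ is equicontinuous on $\A$.

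Next I would upgrade $\delta$ from $\D(\delta)$ to all of $\Ao$. Fix $x\in\Ao$ and a continuous seminorm $p$ on $\A$, choose the corresponding $q$ from (\ref{420}), and, using $\tau$-density of $\D(\delta)$ in $\Ao$, pick $y\in\D(\delta)$ with $q(x-y)<\varepsilon$. The triangle inequality then gives
\[
p(\delta_n(x)-\delta_m(x))\le p(\delta_n(x-y))+p(\delta_n(y)-\delta_m(y))+p(\delta_m(y-x))\le 2\varepsilon+p(\delta_n(y)-\delta_m(y)),
\]
and the last term tends to $0$ because $y\in\D(\delta)$. Hence $\{\delta_n(x)\}$ is $\tau$-Cauchy and $\delta(x):=\tau-\lim_n\delta_n(x)$ is well-defined in $\A$. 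To verify that this $\delta$ is a *-derivation in the sense of Definition \ref{Definition 3.1}, I would pass to the limit in the defining identities for $\delta_n$: linearity is automatic, $\tau$-continuity of the involution yields $\delta(x^*)=\delta(x)^*$, and the separate $\tau$-continuity of the left and right multiplications in the locally convex quasi *-algebra $\A$ lets the limit pass through each summand of $\delta_n(xy)=x\delta_n(y)+\delta_n(x)y$ to produce the Leibniz rule.

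Finally, for the equality $\{a\in\A:\exists\,\tau-\lim_n\tilde\delta_n(a)\}=\A$, I would repeat the very same Cauchy estimate one story higher. For $a\in\A$, density of $\Ao$ in $\A$ combined with the uniform bound from the first step yields $x\in\Ao$ with $q(a-x)<\varepsilon$ and
\[
p(\tilde\delta_n(a)-\tilde\delta_m(a))\le 2\varepsilon+p(\delta_n(x)-\delta_m(x)),
\]
whose last term now tends to $0$ by the previous paragraph. The only genuinely delicate point in the whole scheme is the transfer of the uniform inequality (\ref{420}) from $\Ao$ to $\A$ for the extensions $\tilde\delta_n$; after that, everything is a mechanical iteration of the same density-plus-equicontinuity trick, with the $\tau$-completeness of $\A$ (or of its ambient completion $\overline{\Ao}[\tau]$) supplying the actual limits.
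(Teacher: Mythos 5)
Your proposal is correct and follows essentially the same route as the paper: the same density-plus-uniform-continuity triangle-inequality estimate showing that $\{\tilde\delta_n(a)\}$ is $\tau$-Cauchy for every $a\in\A$, with the *-derivation property of $\delta$ obtained by passing to the limit (the paper simply calls this step trivial). The only difference is organizational: the paper approximates $a\in\A$ directly by a net in $\D(\delta)$, which is dense in $\A$, in one step, while you interpolate through $\Ao$ first and make explicit the transfer of the bound (\ref{420}) to the extensions $\tilde\delta_n$, a point the paper uses implicitly when it writes $p(\tilde\delta_n(a-x_\alpha))\leq q(a-x_\alpha)$.
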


\begin{proof} The proof that $\delta$ is a *-derivation is trivial.

Let $a$ be a generic element in $\A$. Since, by assumption,
$\D(\delta)$ is $\tau$-dense in $\Ao$, and therefore in $\A$,
there exists a net $\{x_\alpha\}\subset D(\delta)$
$\tau$-converging to $a$. This means that for any continuous
seminorms $p$ and for any $\epsilon>0$ there exists
$\alpha_{p,\epsilon}$ such that $p(a-x_\alpha)<\epsilon$ for all
$\alpha>\alpha_{p,\epsilon}$.

Take an arbitrary continuous seminorm $p$ on $\A$. Let $q$ be the
continuous seminorm on $\A$ satisfying (\ref{420}). Then,

\begin{eqnarray*}
p(\tilde\delta_n(a)-\tilde\delta_m(a))&\leq&
p(\tilde\delta_n(a-x_\alpha))+p((\tilde\delta_n-\tilde\delta_m)(x_\alpha))+
p(\tilde\delta_m(a-x_\alpha))\\
&\leq&
2q(a-x_\alpha))+p((\tilde\delta_n-\tilde\delta_m)(x_\alpha))\\
&\leq& 2\epsilon +p((\tilde\delta_n-\tilde\delta_m)(x_\alpha))\leq
\epsilon',
\end{eqnarray*}
for all fixed $\alpha>\alpha_{q,\epsilon}$ and $n,m$ large enough.
This completes the proof.

\end{proof}

\vspace{3mm}

All the results obtained in this section rely on the fact that
there exists one  underlying Hilbert space related to the
representation, in the case of locally convex quasi *-algebras, or
to triplets of Hilbert spaces for CQ*-algebras. However, it is
known that in some physically relevant situation like in quantum
field theory, the relevant operators are the quantum fields and
these operators belong to $\LL(\D,\D')$ for suitable $\D$, instead
of being in some $\LL^\dagger(\D,\Hil)$. This motivates our
interest for the next result, which extends in a non trivial way
Proposition \ref{theorem42}. Before stating the Proposition, we
need to introduce some definitions.

Let $(\A,\Ao)$ be a quasi *-algebra and $\pi_0$ a *-representation
of $\Ao$ on the domain $\D_{\pi_0}\subset \Hil_{\pi_0}$. This
means that $\pi_0$ maps $\Ao$ into $\LL^\dagger(\D_{\pi_0})$ and
that $\pi_0$ is a *-homomorphism of *-algebras. As usual, we endow
$\D_{\pi_0}$ with the topology $t_\dagger$, the graph topology
generated by $\LL^\dagger(\D_{\pi_0})$: in this way we get the
rigged Hilbert space $\D_{\pi_0}\subset \Hil_{\pi_0}\subset
\D_{\pi_0}'$, where $\D_{\pi_0}'$ is the dual of
$\D_{\pi_0}[t_\dagger]$. On $\D_{\pi_0}'$ we consider the strong
dual topology $t_\dagger'$ defined by the seminorms
\begin{equation}
\|F\|_{\cal M}=\sup_{\xi\in{\cal M}}|<F,\xi>|, \quad\quad {\cal M}
\mbox{ bounded in } \D_{\pi_0}[t_\dagger]. \label{422}
\end{equation}
In $\LL(\D_{\pi_0},\D_{\pi_0}')$ we consider  the {\em
quasi-strong topology} $\tau_{qs}$ defined by the seminorms
$$
\LL(\D_{\pi_0},\D_{\pi_0}')\ni X\rightarrow \|X\xi\|_{\cal M},
\quad\quad \xi\in\D_{\pi_0}, {\cal M} \mbox{ bounded in }
\D_{\pi_0}[t_\dagger];
$$
and the {\em uniform topology} $\tau_{\D}$, defined by the
seminorms
$$
\LL(\D_{\pi_0},\D_{\pi_0}')\ni X\rightarrow \|X\|_{\cal
M}=\sup_{\xi,\eta\in{\cal M}}|<X\xi,\eta>|, \quad  {\cal M} \mbox{
bounded in } \D_{\pi_0}[t_\dagger].
$$
\begin{defn}

Let $(\A,\Ao)$ and $\pi_0$ be as above. A linear map
$\pi:\A\to\LL(\D_{\pi},\D_{\pi}')$ is called a {\em
qu*-representation} of $\A$ associated with $\pi_0$  if $\pi$
extends $\pi_0$ and

$\pi(a^*)=\pi(a)^\dagger \quad \forall a\in \A$;

$\pi(ax)=\pi(a)\pi_0(x) \quad \forall a\in \A, x\in\Ao$.
\label{Definition 4.2}
\end{defn}
\begin{thm}

Let $(\A,\Ao)$ be a locally convex quasi *-algebra with identity
and with topology $\tau$ and $\delta$ be a *-derivation of $\Ao$.

Then the following statements are equivalent:

(i) There exists a $(\tau-\tau_{qs})$-continuous, ultra-cyclic
qu*-representation $\pi$ of $(\A,\Ao)$, with ultra-cyclic vector
$\xi_0$ such that the *-derivation $\delta_\pi$ induced by  $\pi$
is spatial, i.e. there exists $H=H^\dagger\in
\LL(\D_{\pi},\D_{\pi}')$ such that
\begin{equation}
\delta_\pi(\pi(x))=i\{H\circ\pi(x)-\pi(x)\circ H\},\quad \forall
x\in\Ao. \label{423}
\end{equation}

\vspace{2mm} (ii) There exists a positive linear functional $f$ on
$\Ao$ and a sesquilinear positive form $\Omega$ on $\Ao\times\Ao$
such that:

\begin{itemize}
\item[(a)] for some continuous seminorm p on $\A[\tau]$,
\begin{equation}
f(x^*x)\leq p(x)^2, \quad \forall x\in \Ao, \label{424}
\end{equation}
\item[(b)]
Let $\tilde f$ be the continuous extension of $f$ to  $\A$, then
the following inequalities hold:
\begin{equation}
|\tilde f(y^*x)|\leq p(x) \Omega(y,y)^{1/2}, \quad \forall x,y\in
\Ao, \label{425}
\end{equation}
for some continuous seminorm $p$;
\item[(c)]
\begin{equation}
|\tilde f(y^*ax)|\leq \gamma_a\Omega(x,x)^{1/2}\Omega(y,y)^{1/2},
\quad \forall x,y\in \Ao, a\in\A \label{426}
\end{equation}
for some positive constant $\gamma_a$;
\item[(d)]
\begin{equation}
|\tilde f(\delta(x))|\leq
C(\Omega(x,x)^{1/2}+\Omega(x^*,x^*)^{1/2}), \quad \forall x\in
\Ao, \label{427}
\end{equation}
for some positive constant $C$.
\item[(e)]
For any ultra-cyclic *-representation $\Theta$ of $\Ao$, with
ultra-cyclic vector $\xi_\theta$, satisfying
$$f(x)=<\Theta(x)\xi_\theta,\xi_\theta>,$$ for all $x\in \Ao$, the
sesquilinear form on $\D_\theta\times\D_\theta$,
$\D_\theta=\Theta(\Ao)\xi_\theta$, defined by
$$\varphi_\theta(\Theta(x)\xi_\theta,
\Theta(y)\xi_\theta)=\Omega(x,y)$$ is jointly continuous on
$\D_\theta[t_\dagger]$.
\end{itemize}
\label{theorem43}

\end{thm}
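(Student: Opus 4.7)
The proof will follow the template of Theorem \ref{theorem41}, with an auxiliary sesquilinear form $\Omega$ taking the place of what in Theorem \ref{theorem41} was captured directly by the Hilbert norm attached to $\varphi$. The role of $\Omega$ is to encode the graph topology of the representation domain, which is now necessary because $\pi$ maps into $\LL(\D_\pi,\D_\pi')$ rather than $\LL^\dagger(\D_\pi,\Hil_\pi)$, so that the Hilbert-space norm of $\pi_0(x)\xi_0$ alone no longer suffices to control the forms $\pi(a)$.

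For (i) $\Rightarrow$ (ii), I would take $f(x)=\langle\pi_0(x)\xi_0,\xi_0\rangle$ on $\Ao$ and extend it to $\tilde f(a)=\langle\pi(a)\xi_0,\xi_0\rangle_{(\D_\pi',\D_\pi)}$ on $\A$; the $(\tau-\tau_{qs})$-continuity of $\pi$ specialized to the bounded singleton $\{\xi_0\}$, combined with the continuity of the multiplication, yields (a). The delicate step is the choice of $\Omega$: I would build a positive sesquilinear form such that $\Omega(x,x)^{1/2}$ dominates every graph seminorm of $\pi_0(x)\xi_0$ needed to bound the bilinear forms $\pi(a)$, concretely as a weighted series $\Omega(x,y)=\sum_k c_k\langle A_k\pi_0(x)\xi_0,A_k\pi_0(y)\xi_0\rangle$ over a countable family $\{A_k\}\subset\LL^\dagger(\D_\pi)$ adapted to a fundamental system of bounded subsets of $\D_\pi[t_\dagger]$. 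Conditions (b) and (d) then reduce to Cauchy--Schwarz applied to $\tilde f(y^\ast x)=\langle\pi_0(x)\xi_0,\pi_0(y)\xi_0\rangle$ and to $\tilde f(\delta(x))=i\langle H\xi_0,\pi_0(x)\xi_0\rangle-i\langle H\xi_0,\pi_0(x^\ast)\xi_0\rangle^{\overline{\;}}$; (c) follows from $\tilde f(y^\ast a x)=\langle\pi(a)\pi_0(x)\xi_0,\pi_0(y)\xi_0\rangle$ together with the continuity of $\pi(a)\in\LL(\D_\pi,\D_\pi')$ on the bounded sets encoded in $\Omega$. Finally (e) is a consequence of the uniqueness, up to unitary equivalence, of an ultra-cyclic $\ast$-representation $\Theta$ of $\Ao$ realizing $f$: the intertwiner $U:\Hil_{\pi_0}\to\Hil_\theta$ transports $\Omega$ into the corresponding form on $\D_\theta$ and preserves its joint $t_\dagger$-continuity.

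For (ii) $\Rightarrow$ (i), I would run the GNS construction from $f$ to obtain a triple $(\Hil_f,\Theta_f,\xi_f)$ with $\xi_f=\lambda_f(\1)$ ultra-cyclic and $\Theta_f$ $(\tau-\tau_s)$-continuous by (a), and endow $\D_f=\lambda_f(\Ao)$ with the graph topology $t_\dagger$ induced by $\LL^\dagger(\D_f)$. For each $a\in\A$ I consider the sesquilinear form on $\D_f\times\D_f$ given by $(\lambda_f(x),\lambda_f(y))\mapsto\tilde f(y^\ast a x)$; condition (c) bounds it by $\gamma_a\Omega(x,x)^{1/2}\Omega(y,y)^{1/2}$, and (e), applied to $\Theta=\Theta_f$, ensures that $\Omega$ induces a jointly $t_\dagger$-continuous form on $\D_f$. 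Consequently the form is jointly $t_\dagger$-continuous and determines a unique operator $\pi(a)\in\LL(\D_f,\D_f')$ via $\langle\pi(a)\lambda_f(x),\lambda_f(y)\rangle=\tilde f(y^\ast a x)$. Standard verifications then show that $\pi$ is a qu*-representation of $\A$ associated with $\Theta_f$: it extends $\Theta_f$ since for $a\in\Ao$ the form collapses to $\langle\lambda_f(ax),\lambda_f(y)\rangle$; the identity $\pi(a^\ast)=\pi(a)^\dagger$ is read off from the definition; and the module property $\pi(ax)=\pi(a)\Theta_f(x)$ follows from the rewriting $\tilde f(y^\ast(ax)z)=\tilde f(y^\ast a(xz))$. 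Condition (b) yields the $(\tau-\tau_{qs})$-continuity of $\pi$ and ultra-cyclicity is inherited from $\xi_f$.

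It remains to prove spatiality of the induced derivation $\delta_\pi$, for which I would mimic the last part of the proof of Theorem \ref{theorem41} inside the rigged Hilbert space $\D_f\subset\Hil_f\subset\D_f'$. I set $F(\{\lambda_f(x),\lambda_f(x^\ast)\})=i\tilde f(\delta(x))$ on the subspace $\M_f\subset\Hil_f\oplus\overline{\Hil_f}$ spanned by such pairs; condition (d) makes $F$ bounded, Riesz's lemma provides $(\xi_1,\xi_2)$, and I put $\eta=(\xi_2-\xi_1)/2\in\Hil_f$ and $H\lambda_f(x)=\frac{1}{i}\lambda_f(\delta(x))+\hat\pi(x)\eta$, with $\hat\pi(x)$ the canonical extension of $\pi(x)\in\LL^\dagger(\D_f)$ to $\D_f'$. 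The main obstacle, absent in Theorem \ref{theorem41}, is to show that $H\in\LL(\D_f,\D_f')$: since $\delta(x)$ lives in $\A$ and not in $\Ao$, $\lambda_f(\delta(x))$ need not belong to $\Hil_f$ and must be interpreted as an element of $\D_f'$ through the continuous antilinear form $\lambda_f(y)\mapsto\tilde f(y^\ast\delta(x))$, whose $t_\dagger$-continuity on $\D_f$ is furnished by (c) applied to $a=\delta(x)$ jointly with (e). Symmetry $H=H^\dagger$ and the implementation formula $\delta_\pi(\pi(x))=i\{H\circ\pi(x)-\pi(x)\circ H\}$ are then verified by the same computation as in Theorem \ref{theorem41}, using the analogue of (\ref{414}) obtained from the Riesz representation together with the invariance built into the module property $\pi(ax)=\pi(a)\Theta_f(x)$.
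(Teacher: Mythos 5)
Your proposal follows, in both directions, essentially the same route as the paper: in (i) $\Rightarrow$ (ii) you take $f(x)=\langle\pi_0(x)\xi_0,\xi_0\rangle$, build $\Omega$ out of graph seminorms of the orbit vectors, and get (e) from the essential uniqueness (up to unitary equivalence) of the GNS representation of $f$; in (ii) $\Rightarrow$ (i) you run the GNS construction for $f$, use (c) together with (e) to realize each $\pi(a)$ as an element of $\LL(\D_f,\D_f')$ through the form $\tilde f(y^*ax)$, check the qu*-representation properties and the $(\tau-\tau_{qs})$-continuity from (b), and then repeat the Riesz-lemma construction of Theorem \ref{theorem41} to produce $H$. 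This is exactly the paper's scheme (the paper in fact only sketches the final spatiality step by referring back to Theorem \ref{theorem41}, so your explicit treatment of $\lambda_f(\delta(x))$ as an element of $\D_f'$ is a welcome filling-in; the paper obtains the same fact by proving $\A/\mathcal N_f\subset\D_{\pi_0}'$ from the extension of (\ref{425}) to $\Ao\times\A$, while your route via (c) with $a=\delta(x)$, $x=\1$ and (e) works equally well since the algebra has a unit).

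The one point where you genuinely deviate, and where your version is weaker than what is needed, is the construction of $\Omega$ in (i) $\Rightarrow$ (ii) as an infinite weighted series $\sum_k c_k\langle A_k\pi_0(x)\xi_0,A_k\pi_0(y)\xi_0\rangle$ over a countable family adapted to a fundamental system of bounded sets. First, $\D_\pi[t_\dagger]$ need not possess a countable fundamental system of bounded sets, and the convergence of the series is not addressed. More seriously, such an $\Omega$ threatens condition (e) itself: joint continuity of $\varphi_\theta$ on $\D_\theta[t_\dagger]$ means domination by a single continuous seminorm of the graph topology, hence (by directedness of the family $\{\|A\,\cdot\,\|,\ A\in\LL^\dagger(\D)\}$, e.g. via $A^\dagger A+B^\dagger B+\1$) by a single $\|A'\,\cdot\,\|$, and an infinite sum of graph seminorms has no reason to be so dominated. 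The paper's construction avoids this entirely: only three graph seminorms actually occur, namely the one labelled by $T$ coming from the duality bound on $\tilde f(y^*ax)$, the one labelled by $C$ coming from the continuity of $\pi(a)\in\LL(\D_\pi,\D_\pi')$, and the one labelled by $B$ coming from the continuity of the functional $H\xi_0\in\D_\pi'$ (note that here, unlike in Theorem \ref{theorem41}, $H\xi_0$ need not lie in $\Hil_\pi$, so your appeal to a plain Cauchy--Schwarz estimate for (d) has to be replaced by exactly this continuity argument); one then chooses a single $A\in\LL^\dagger(\D_\pi)$ with $\|A\eta\|\geq\|B\eta\|,\|C\eta\|,\|T\eta\|$ and sets $\Omega(x,y)=\langle A\pi(x)\xi_0,A\pi(y)\xi_0\rangle$, which makes (\ref{424})--(\ref{427}) immediate and renders (e) trivial for $\pi_0$, hence for every $\Theta$ by unitary transport. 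You should replace your series by this single dominating operator; with that modification the rest of your argument is sound and coincides with the paper's.
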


\begin{proof}
Let us prove that (i) implies (ii). For this, let $\pi$ be a
$(\tau-\tau_{qs})$-continuous, ultra-cyclic qu*-representation  of
$\A$ associated with $\pi_0$, with ultra-cyclic vector $\xi_0$:
$\pi_0(\Ao)\xi_0=\D_\pi$. For all $x\in\Ao$ we define
$f(x)=<\pi_0(x)\xi_0,\xi_0>$. Then, since $\pi$ coincides with
$\pi_0$ on $\Ao$ and since $\pi$ is $(\tau-\tau_{qs})$-continuous,
we have

$$
f(x^*x)=<\pi_0(x^*x)\xi_0,\xi_0>=<\pi(x^*x)\xi_0,\xi_0>=\|\pi(x)\xi_0\|^2\leq
p(x)^2,
$$
for some continuous seminorm $p$ of $\A[\tau]$. In fact,
$\|\pi(x)\xi_0\|$ is one of the seminorms defining $\tau_{qs}$.
Calling $\tilde f$ the continuous extention of $f$ it is clear
that, for any $a\in \A$, we have $\tilde
f(a)=<\pi(a)\xi_0,\xi_0>$. Therefore, for $x,y\in\Ao$  and
$a\in\A$, we have
$$
\tilde
f(y^*ax)=<\pi(y^*ax)\xi_0,\xi_0>=<\pi(ax)\xi_0,\pi_0(y)\xi_0>=<\pi(a)\pi_0(x)\xi_0,\pi_0(y)\xi_0>,
$$
and, since by assumption $\pi(a)\pi_0(x)\xi_0$ is a continuous
functional on $\D[t_\dagger]$, there exists a positive constant
$\gamma$ and a continuous seminorm on $\D[t_\dagger]$ such that
$$
|\tilde f(y^*ax)|\leq \gamma \|T\pi_0(y)\xi_0\|,
$$
where $T\in\LL^\dagger (\D_\pi)$ labels the seminorm. The best
value of $\gamma$ can be found considering the following bounded
subset ${\cal M}$ of $\D_\pi[t_\dagger]$: ${\cal M}=\{\xi\in
\D_\pi: \|T\xi\|=1\}$. In this way we get
\begin{equation}
|\tilde f(y^*ax)|\leq \|\pi(a)\pi_0(x)\xi_0\|_{\cal M}
\|T\pi_0(y)\xi_0\|\leq p_x(a) \|T\pi_0(y)\xi_0\|. \label{428}
\end{equation}
The last inequality follows from the $(\tau-\tau_{qs})$-continuity
of $\pi$. Furthermore, since $\pi(a)$ belongs to
$\LL(\D_{\pi},\D_{\pi}')$, the following inequality also holds:
\begin{equation}
\|\pi(a)\pi_0(x)\xi_0\|_{\cal M} \leq \gamma_2 \|C\pi(x)\xi_0\|
\label{429}
\end{equation}
for a certain positive constant $\gamma_2$ and an operator
$C\in\LL^\dagger (\D_\pi)$.

Moreover, since $\tilde
f(\delta(x))=i\{<\pi(x)\xi_0,H\xi_0>-<H\xi_0,\pi(x^*)\xi_0>\}$,
and since, as a functional, $H\xi_0$ is continuous, there exists a
$B\in\LL^\dagger(\D_\pi)$ and a positive constant $\gamma_1$ such
that

\begin{equation}
|\tilde f(\delta(x))|\leq \gamma_1
(\|B\pi(x)\xi_0\|+\|B\pi(x^*)\xi_0\|). \label{430}
\end{equation}

The above inequalities refer to three elements of $\LL^\dagger
(\D_\pi)$, $B,C$ and $T$. It is always possible to find another
element $A\in \LL^\dagger (\D_\pi)$ such that

\begin{equation}
\|A\eta\|\geqq \|B\eta\|, \quad \|A\eta\|\geqq \|T\eta\|, \quad
\|A\eta\|\geqq \|C\eta\|, \quad \forall \eta\in\D_\pi. \label{431}
\end{equation}

Let us now define the positive sesquilinear form $\Omega$ on
$\Ao\times\Ao$ as

\begin{equation}
\Omega(x,y)=<A\pi(x)\xi_0,A\pi(y)\xi_0>, \quad x,y\in\Ao.
\label{432}
\end{equation}

Then, because of (\ref{431}), inequalities (\ref{424})-(\ref{427})
easily follow. As for the joint continuity of $\varphi_\theta$, we
start noticing that, since
$f(x)=<\pi_0(x)\xi_0,\xi_0>=<\Theta(x)\xi_\theta,\xi_\theta>$,
then $\Theta$ is unitarily equivalent to $\pi_0$, since they are
both unitarily equivalent to the GNS representation $\pi_f$
defined by $f$ on $\Ao$, because of the essential uniqueness of
the latter. Thus, there exists a unitary operator
$U:\Hil_\theta\rightarrow\Hil_{\pi_0}$, with $\xi_0=U\xi_\theta$
and such that $\Theta(x)=U^{-1}\pi_0(x)U$.

By the definition itself,
$$
\varphi_{\pi_0}(\pi_0(x)\xi_0,
\pi_0(y)\xi_0)=\Omega(x,y)=<A\pi_0(x)\xi_0, A\pi_0(y)\xi_0>
$$
then $\varphi_{\pi_0}$ is jointly continuous on
$\D_{\pi_0}[t_\dagger]$. Therefore
$$
\varphi_{\theta}(\Theta(x)\xi_\theta,
\Theta(y)\xi_\theta)=\Omega(x,y)=<A\pi_0(x)\xi_0,
A\pi_0(y)\xi_0>=<U^{-1}AU\Theta(x)\xi_\theta,
U^{-1}AU\Theta(y)\xi_\theta>
$$
and $\varphi_{\theta}$ is jointly continuous on
$\D_{\theta}[t_\dagger]$, too.

\vspace{3mm}

We prove now the converse implication,i.e.  (ii) implies (i).

We assume that there exist $f$ and $\Omega$ satisfying all the
properties we have required in (ii).  We define the following
vector space: ${\cal N}_f=\{a\in\A:\, \tilde f(a^*x)=0\,\,\forall
x\in\Ao\}$. It is clear that if $a\in {\cal N}_f$ and $y\in\Ao$,
then $ya\in{\cal N}_f$. We denote with $\lambda_f(a)$, for
$a\in\A$, the element of the vector space $\A/{\cal N}_f$
containing $a$. The subspace $\lambda_f(\Ao)=\{\lambda_f(x),
\,x\in\Ao\}$ is a pre-Hilbert space with inner product
$$
<\lambda_f(x),\lambda_f(y)>=f(y^*x), \quad x,y\in\Ao
$$
and the form $<\lambda_f(x),\lambda_f(a)>=\tilde f(a^*x)$,
$x\in\Ao$, $a\in\A$, puts $\A/{\cal N}_f$ and $\lambda_f(\Ao)$ in
separating duality. Now we can define a ultra-cyclic
*-representation $\pi_0$ of $\Ao$ in the following way: its domain
$\D_{\pi_0}$ coincides with $\lambda_f(\Ao)$, and
$\pi_0(x)\lambda_f(y)=\lambda_f(xy)$, for $x,y\in\Ao$. The vector
$\lambda_f(\1)$ is ultra-cyclic  and
$f(x)=<\pi_0(x)\lambda_f(\1),\lambda_f(\1)>$, for all $x\in\Ao$.
Therefore the sesquilinear form
$\varphi_{\pi_0}(\pi_0(x)\lambda_f(\1),\pi_0(y)\lambda_f(\1))=\Omega(x,y)$
is jointly continuous in $\D_{\pi_0}[t_\dagger]$.

We now claim that $\A/{\cal N}_f\subset \D_{\pi_0}'$, the dual
space of $\D_{\pi_0}[t_\dagger]$. This follows from the joint
continuity of $\varphi_{\pi_0}$, which gives the following
estimate
\begin{equation}
|\Omega(x,y)|\leq \gamma \|A'\pi_0(x)\lambda_f(\1)\|
\|A'\pi_0(y)\lambda_f(\1)\| \label{433}
\end{equation}
which holds for all $x,y\in\Ao$, for suitable $\gamma>0$ and
$A'\in \LL^\dagger(\D_{\pi_0})$. Using the extension of
(\ref{425}) to $\Ao\times\A$ and (\ref{433}) we find

$$
|<\lambda_f(x),\lambda_f(a)>|=|\tilde f(a^*x)|\leq p(a)
\Omega(x,x)^{1/2}\leq \gamma^{1/2}p(a)
\|A'\pi_0(x)\lambda_f(\1)\|,
$$
which implies that $\lambda_f(a)\in\D_{\pi_0}'$.

We can now extend $\pi_0$ to $\A$ in a natural way: for $a\in \A$
we put $\pi(a)\lambda_f(x)=\lambda_f(ax)$, for all $x\in\Ao$. For
each $a \in \A$, $\pi(a)$ is well-defined and maps
$\D_{\pi_0}[t_\dagger]$ into $\D_{\pi_0}'[t_\dagger']$
continuously. Moreover $\pi$ is $(\tau-\tau_{qs})$-continuous. The
induced derivation $\delta_\pi$ is well-defined, as is easily
checked, and its spatiality can be proven by repeating essentially
the same steps as in Proposition \ref{theorem41}.
\end{proof}

\vspace{4mm} {\bf Remark:--} In the so-called Wightman formulation
 of quantum field theory {see, e.g. \cite{haag}), the point-like
$A(x)$, $x \in {\mb R}^4$, can be a very singular mathematical
object such as a sesquilinear form depending on $x$ and defined on
$\D \times \D$, where $\D$ is a dense domain in Hilbert space
$\Hil$. The {\em smeared field} is an operator-valued distribution
$f \in {\mc S}({\mb R}^4)\to \LD$, ${\mc S}({\mb R}^4)$ being the
space of Schwartz test functions. If $f$ has support contained in
a bounded region ${\mc O}$ of ${\mb R}^4$, then $\overline{A(f)}$
is affiliated with the local von Neumann algebra $\A({\mc O})$ of
all observables in ${\mc O}$.

A reasonable approach \cite{fredhert, epitrafields} consists in
considering the point-like field $A(x)$, for each $x \in {\mb
R}^4$, as an element of $\LDD$, once a locally convex topology on
$\D$ has been defined. A crucial physical prescription is that the
field must be covariant under the action of a unitary
representation $U(g)$ of some transformation group (such as the
Poincar\'e or Lorentz group) and, as is known, the infinitesimal
generator $H$ of time translations gives the energy operator of
the system which defines in natural way a spatial *-derivation of
the quasi *-algebra $(\A, \Ao)$ of observables.

There could be however a different approach. This occurs when a
field $x \mapsto A(x)$ is defined on the basis of some heuristic
considerations. In order that $A(x)$ represent a reasonable
physical solution of the problem under consideration, covariance
under some Lie algebra of infinitesimal transformation must be
imposed. For the infinitesimal time translations this amounts to
find some *-derivation $\delta$ of the quasi *-algebra obtained by
taking the weak completion of the *-algebra $\Ao$ generated by the
local von Neumann algebras $\A({\mc O})$, with ${\mc O}$ a bounded
region of ${\mb R}^4$. But, of course, a number of problems arise.

The first one consists in finding an appropriate domain $\D$ for
the family of  operators $\{  A(f);\, f \in {\mc S}({\mb R}^4)\}$
and an appropriate topology on $\D$, in such a way that $A(x) \in
\LDD$ for every $x \in {\mb R}^4$. Once this is done, if the
identical representation has the properties required  in Theorem
\ref{theorem43}, then a symmetric operator $H$ implementing
$\delta$ can be found and one expects $H$ to be the energy
operator of the system. But, as is well-known, the problem of
integrating $\delta$ is far to be solved even in much more regular
situations than those considered here. We hope to discuss these
problems in a future paper.

\vspace{6mm} \noindent{\large \bf Acknowledgments} \vspace{5mm}

We would like to thank Prof. K. Schm{\"u}dgen for his useful
remarks.

We acknowledge the financial support  of the Universit\`a degli
Studi di Palermo (Ufficio Relazioni Internazionali) and of the
Italian Ministry of Scientific Research.

\vspace{8mm}

\end{document}